%
%
%
%
%
%
%
\RequirePackage{fix-cm}
\documentclass[smallextended,numbook]{svjour3-pp}       
\smartqed  
%
%
%
\usepackage{natbib, graphicx}
\usepackage{amsmath, amsfonts, amssymb}
\usepackage{enumerate}

%
\journalname{Journal of Mathematical Biology}

\allowdisplaybreaks

\begin{document}

\title{Selective sweeps for recessive alleles and for other
  modes of dominance }

\titlerunning{Selective sweeps for recessive alleles}        

\author{G. Ewing \and J. Hermisson \and\\ P. Pfaffelhuber \and
  J. Rudolf}


\institute{G. Ewing \at
University of Vienna\\
\email{greg.ewing@univie.ac.at}           
\and J. Hermisson \at 
University of Vienna\\ 
\email{joachim.hermisson@univie.ac.at}
\and P. Pfaffelhuber\at 
Albert-Ludwigs University of Freiburg\\
\email{p.p@stochastik.uni-freiburg.de}
}

\date{Received: date / Accepted: date}

\maketitle

\begin{abstract}
  \noindent
  A selective sweep describes the reduction of linked genetic
  variation due to strong positive selection. If $s$ is the fitness
  advantage of a homozygote for the beneficial allele and $h$ its
  dominance coefficient, it is usually assumed that $h=1/2$, i.e.\ the
  beneficial allele is co-dominant. We complement existing theory for
  selective sweeps by assuming that $h$ is any value in $[0,1]$. We
  show that genetic diversity patterns under selective sweeps with
  strength $s$ and dominance $0<h<1$ are similar to co-dominant sweeps
  with selection strength $2hs$. Moreover, we focus on the case $h=0$
  of a completely recessive beneficial allele. We find that the length
  of the sweep, i.e.\ the time from occurrence until fixation of the
  beneficial allele, is of the order of $\sqrt{N/s}$ generations, if
  $N$ is the population size. Simulations as well as our results show
  that genetic diversity patterns in the recessive case $h=0$ greatly
  differ from all other cases.  \keywords{Genetic hitchhiking \and
    selective sweep \and beneficial mutation \and recessive allele
    \and genealogy}
  \subclass{92D15 \and 60J70 \and 60K35}
\end{abstract}

\section{Introduction}
The model of selective sweeps (also called genetic hitchhiking)
predicts a reduction in sequence diversity at a neutral locus closely
linked to a beneficial allele \citep{MaynardSmithHaigh1974}. Most
analysis of this model assumes that the beneficial allele is co-dominant. 
Accordingly, genome scans for the evidence of recent positive selection 
test a neutral model against (strong) selection for a co-dominant allele.
Many such methods use information about the expected site frequency spectrum 
under a hitchhiking model to detect signatures of positive selection (e.g.\
\citealp{KimStephan2002,NielsenEtAl2005,JensenEtAl2005}). 
Simpler approaches use test statistics such as sample heterozygosity 
(usually called $\pi$), or Tajima's $D$ to reject a standard neutral model. 
Simulation results by \cite{pmid16219788} and \cite{pmid16687733} show that the
false-discovery and false-negative rates of such methods increase if selection acts 
on a recessive rather than a co-dominant allele.

Although adaptations are often assumed to be rather dominant than recessive 
\citep{Charlesworth1998}, 
also the case of recessive beneficial alleles is well documented in the
empirical literature. Many cases that have been described concern resistance
alleles. Here, the loss of a function of a gene conveys resistance to a
pathogen. Often only the homozygous mutant is resistant, leading to a recessive 
sweep. Examples include: (i) The Duffy blood group locus in humans, where
the homozygous null-allele (FY-0) confers complete resistance to vivax malaria. 
\cite{HamblinDiRienzo2000} report that the FY-0-genotype is
at or near fixation in most sub-Saharan African populations but is
very rare outside Africa, which suggests that it is locally under strong
positive selection.
(ii) Resistance to the yellow mosaic virus disease in barley has been
mapped to several recessive resistance genes
\citep{pmid15131346}. (iii) The plant gene eIF4E (present e.g.\ in
pepper, pea and tomato) is a factor involved in basic cellular
processes and can be used by viruses to complete their life
cycle. Only if the function of both gene copies is compromised, the
plant is resistant and positive selection can act
\citep{pmid18953590}. (iv) The yellow fever mosquito \emph{{A}edes
  aegypti} is resistant to the drug permethrin, if both copies carry a
replacement mutation in the gene para, as shown in \cite{Garcia2009}.

Starting with the original publication by
\cite{MaynardSmithHaigh1974}, an extensive body of analytical theory
has been established for the hitchhiking model (e.g.\
\citealp{KaplanHudsonLangley1989, StephanWieheLenz1992, Barton1998,
  DurrettSchweinsberg2004, EtheridgePfaffelhuberWakolbinger2006}).  In
addition to results on reduced diversity and the frequency spectrum,
linkage disequilibrium has been studied by
\cite{StephanSongLangley2006,McVean2007,PfaffelhuberStudeny2007,
  LeocardPardoux2010}. Moreover, the model was extended to the case of
multiple origins of the beneficial allele due to mutation or from
standing genetic variation \citep{PenningsHermisson2006a,
  HermissonPennings2005, PenningsHermisson2006b,
  HermissonPfaffelhuber2008}. All these results are built around the
simplest possible scenario for adaptation, where positive selection
acts on a single locus without dominance. Despite its empirical
importance, dominance was only studied quite
recently. \cite{pmid16219788} use computer simulations to demonstrate
the impact of intermediate dominance on the most important summary
statistics for the frequency spectrum, see also
\cite{pmid16687733}. Explicit analytical results are even more sparse
and only exist for the fixation time (duration of the sweep) of the
beneficial allele \citep{vanHerwaarden2002}. The case of a completely
recessive beneficial allele, in particular, has not been treated in
any of these publications.

The goal of our investigation is the extension of previous analytical
results to the case of arbitrary dominance. We focus, in particular,
on the completely recessive case.  While sweeps with intermediate
dominance, $0<h<1$, and selection coefficient $s$ produce diversity
patterns similar to a co-dominant beneficial allele with selection
coefficient $2hs$ (Theorem \ref{T2h}), our results show that recessive
sweeps, $h=0$, are qualitatively different (Theorem \ref{T2}). Also,
for the probability of fixation and the duration of the sweep, the
recessive case is widely different from other modes of dominance. See
Proposition \ref{P1} and Theorem \ref{T1} for the recessive case and
Proposition \ref{P2} and Theorem \ref{T3} for $0<h\leq 1$.

The paper is organized as follows: in Section \ref{S:2} we introduce
the model for selective sweeps with arbitrary dominance coefficient,
both at the selected locus (Subsection \ref{S:a}) and at the neutral
locus (Subsection \ref{S:defSC}). In Section \ref{S:results}, we give
our results on sweeps of recessive alleles.
Section \ref{S:results2} contains our results for sweeps in the cases
$0<h\leq 1$. In Section \ref{S:div}, we describe sequence diversity
patterns under recessive sweeps using simulations and compare them
with the case $0<h< 1$. We conclude with the proofs in Section
\ref{S:proofs}.

\section{The model}
\label{S:2}
We use discrete (Wright--Fisher) models as well as diffusion processes
for modeling allelic frequency paths (Section \ref{S:a}). In order to
study genetic diversity patterns, we use a structured coalescent
(Section \ref{S:defSC}).

\subsection{The allelic frequency path}
\label{S:a}
Consider a one-locus Wright--Fisher model, consisting of $N$ diploid
(and hence $2N$ haploid) individuals. The beneficial mutant is $B$ and
the wildtype allele is denoted $b$. The (relative) fitness of diploids
is given as follows:

\begin{center}
\begin{tabular}{lccc}
  Genotype & $BB$ & $Bb$ & $bb$ \\
  Relative fitness & $1+s$ & $1+sh$ & $1$
\end{tabular}
\end{center}

We are interested in the dynamics of $(X_t^N)_{t=0,1,2,...}$, where
$X^N_t$ is the frequency of the beneficial allele $B$ in generation
$t$. This process is a Markov chain and given $X^N_t=\tfrac i{2N}$,
the transition probabilities are 
$$ \mathbb P[X^N_{t+1} = \tfrac j{2N}|X^N_t=\tfrac i{2N}] = \binom{2N}{j} 
\tilde p_i^j(1-\tilde p_i)^{2N-j},$$ where
$$ \tilde p_i = \frac{i^2(1+s) + i(2N-i)(1+sh)}{i^2(1+s) + 
  2i(2N-i)(1+sh)+(2N-i)^2}.  $$

~

\noindent
For $N\to\infty$, $s\to 0$ such that $2Ns\to\alpha$, and $X^N_0
\Rightarrow X_0$, it is well-known (e.g. \citealp{Ewens2004}) that
$(X^N_{\lfloor Nt\rfloor})_{t\geq 0} \Rightarrow (X_t)_{t\geq 0}$,
where '$\Rightarrow$' means convergence in distribution (in the space
of real-valued functions, equipped with uniform convergence on
compacts), where $\mathcal X := (X_t)_{t\geq 0}$ is the diffusion,
uniquely determined by the stochastic differential equation (SDE)
given in \eqref{eq:SDE0}.

\begin{definition}[Allelic frequency path]\label{def:X}
  The diffusion $\mathcal X = (X_t)_{t\geq 0}$ is the unique solution
  of the SDE
  \begin{align}\label{eq:SDE0}
    dX = \alpha(h+X(1-2h))X(1-X)dt + \sqrt{X(1-X)}dW,
  \end{align}
  where $W$ is a standard Brownian motion. We set
  \begin{align} \label{eq:stop} T_0:=\inf\{t\geq 0: X_t=0\}, \qquad
    T_1:=\inf\{t\geq 0: X_t=1\},
  \end{align}
  which are the times of loss and fixation of the beneficial allele,
  respectively. Moreover, $\mathcal X^{\ast} = (X_t^{\ast})_{t\geq 0}$
  is the process $\mathcal X$, conditioned on the event $\{T_1<T_0\}$.
  We set
  \begin{align}
    \label{eq:Tast}
    T^\ast := \inf\{t\geq 0: X_t^{\ast} = 1\},
  \end{align}
  which is the time of fixation of the beneficial allele.  If not
  mentioned otherwise, we assume that $X_0 = 0$ and $\mathcal X^\ast$
  arises as limit of conditioned processes which are started in
  $\varepsilon$ as $\varepsilon\to 0$. For $\alpha, h\in\mathbb R$, we
  denote the distribution of $\mathcal X$, started in $X_0=x$, by
  $\mathbb P_x^{\alpha,h}[.]$. Expectations and variances are denoted
  by $\mathbb E_x^{\alpha,h}[.]$ and by $\mathbb
    V_x^{\alpha,h}[.]$, respectively.

%
\end{definition}

\begin{remark}[Fixation probability for a single mutant]
  \label{rem:conv1}
  \sloppy Let $\mathbb P_{x}^{N,s,h}[.]$ be the probability measure
  for the Wright--Fisher model with population size $N$, selection
  coefficient $s$ and dominance $h$, started in $X_0^N=x$. Note that
  the weak convergence $(X^N_{\lfloor Nt\rfloor})_{t\geq 0}
  \Rightarrow (X_t)_{t\geq 0}$ does not imply convergence for all
  interesting functionals of the Wright--Fisher model. In particular,
  convergence of fixation probabilities for a single mutant in the
  sense
  $$ \frac{\mathbb P_{1/(2N)}^{N,s_N,h}[X_\infty^N=1]}{
    \mathbb P_{1/(2N)}^{\alpha,
      h}[X_\infty=1]}\xrightarrow{N\to\infty} 1$$ with
  $Ns_N\xrightarrow{N\to\infty}\alpha$ has only been proved in the
  case $h=\tfrac 12$ (see \citealp[p. 565]{BuergerEwens1995}). For
  this reason, all our assertions are stated in the diffusion
  framework.
\end{remark}

\subsection{The structured coalescent}
\label{S:defSC}
Consider a sample of size $n$, taken from the population at the time
of fixation of the beneficial allele.  From each of these individuals,
consider homologous neutral loci, which are linked to the beneficial
allele at recombination distance $\rho$ (i.e.\ the probability that a
recombination occurs between the beneficial and the neutral locus is
$r$ per generation in the Wright--Fisher model of diploid size $N$, and
$2N r \to \rho$ as $N\to\infty$). In order to study the genetic
diversity within these neutral loci, we follow
\cite{KaplanHudsonLangley1989} and \cite{BartonEtheridgeSturm2004},
and introduce the following coalescent process, which is conditioned
on a path of $\mathcal X^\ast$.

\begin{definition}[The structured coalescent \boldmath$\mathcal K$]
  \label{def:K}
  Let $\mathcal X^\ast$ be as in Definition \ref{def:X}. Conditioned
  on $\mathcal X^\ast$, we define a Markov process, $\mathcal K =
  (\mathcal K^b, \mathcal K^w) = (K_\beta^b, K_\beta^w)_{0\leq
    \beta\leq T^\ast}$ with $\beta = T^\ast-t$. Here, $K_\beta^b$ and
  $K_\beta^w$ are the number of lines in the beneficial and wild-type
  background at time $\beta$, respectively. Taking values in $\mathbb
  Z_+^2$, this process starts in $(K_0^b, K_0^w) = (n,0)$ for some
  $n\in\mathbb N$. If $K_\beta = (k^b, k^w)$, then there are
  transitions to
  \begin{enumerate}
  \item $(k^b-1, k^w)$ at rate $\binom{k^b}{2}\tfrac{1}{X_{T^\ast -
        \beta}^\ast}$\\ (two lines in the beneficial background are
    merged to a single line)
  \item $(k^b, k^w-1)$ at rate $\binom{k^w}{2}\tfrac{1}{1-X_{T^\ast
        -\beta}^\ast}$ \\ (two lines in the wildtype background are
    merged to a single line)
  \item $(k^b-1, k^w+1)$ at rate $k^b  \rho (1-X_{T^\ast
      -\beta})$ \\ (one line in the beneficial background changes to
    the wildtype background)
  \item $(k^b+1, k^w-1)$ at rate $k^w \rho X_{T^\ast - \beta}$ \\
    (one line in the wildtype background changes to the beneficial
    background).
  \end{enumerate}
  See Figure \ref{fig1} for an illustration of $\mathcal K$.
\end{definition}

\begin{figure}
  \vspace*{-1.5cm}

  \centering
  \includegraphics[width=10cm]{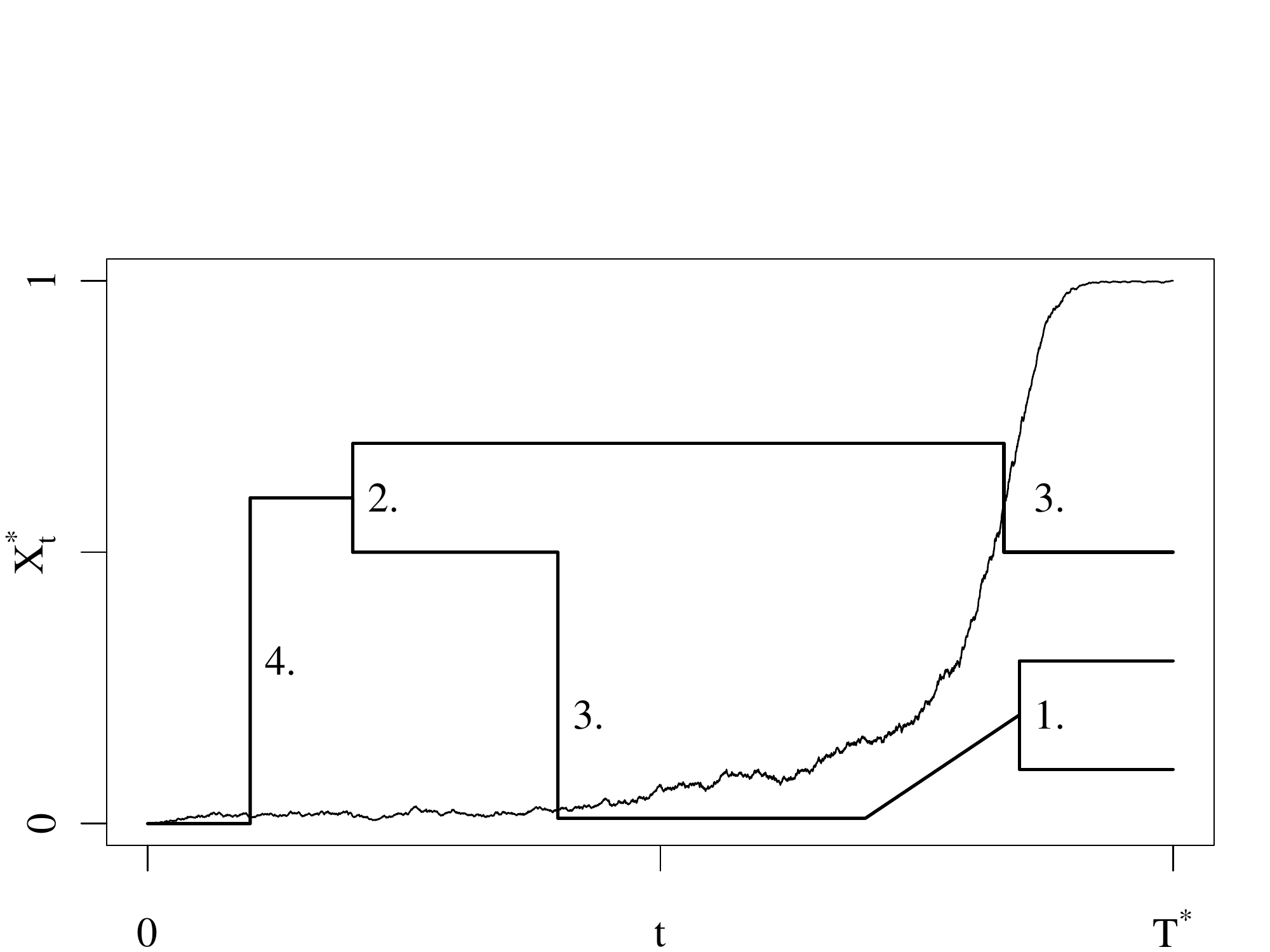} 
  \caption{\label{fig1} Illustration of the structured coalescent in
    the case $h=0$. Pairs of lines can coalesce within the beneficial
    background (event 1.), or within the wildtype background (event
    2.). Changes from the beneficial to the wildtype background (event
    3.) or vice versa (event 4.) may occur as well.}
\end{figure}

\section{Results on recessive alleles}
\label{S:results}
In this section, we focus on the case $h=0$, i.e.\ on properties of
the diffusion $\mathcal X = (X_t)_{t\geq 0}$ given by the SDE
\begin{align}\label{eq:SDE}
  dX = \alpha X^2(1-X)dt + \sqrt{X(1-X)}dW
\end{align}
and the corresponding diffusion $\mathcal X^\ast$, which is
conditioned to hit 1 (recall from Definition \ref{def:X}). It is
crucial to note that the process $\mathcal X^\ast$ spends most of its
time at low frequencies for $h=0$; see Figure \ref{fig:path}(A). The
reason is that for low frequency, most beneficial alleles are found in
heterozygotes and selection can hence not be efficient. In order to
make this statement more quantitative, we will show (see Lemma
\ref{lkey}), that $(\sqrt\alpha X_{t/\sqrt\alpha})_{t\geq 0}$
converges to the diffusion $(Y_t)_{t\geq 0}$ given by $dY = Y^2 dt +
\sqrt{Y} dW$. This implies that $\mathcal X^\ast$ spends most of its
time in frequencies of order $1/\sqrt\alpha$. See also Figure
\ref{fig:path}(B) for an illustration of the process $(\sqrt\alpha
X_{t/\sqrt\alpha})_{t\geq 0}$.

\begin{figure}
\centering \hspace*{0cm}(A) \hspace{.45\textwidth} (B)\\ 

{\includegraphics[width=.48\textwidth,trim=0 0 50 50,clip]{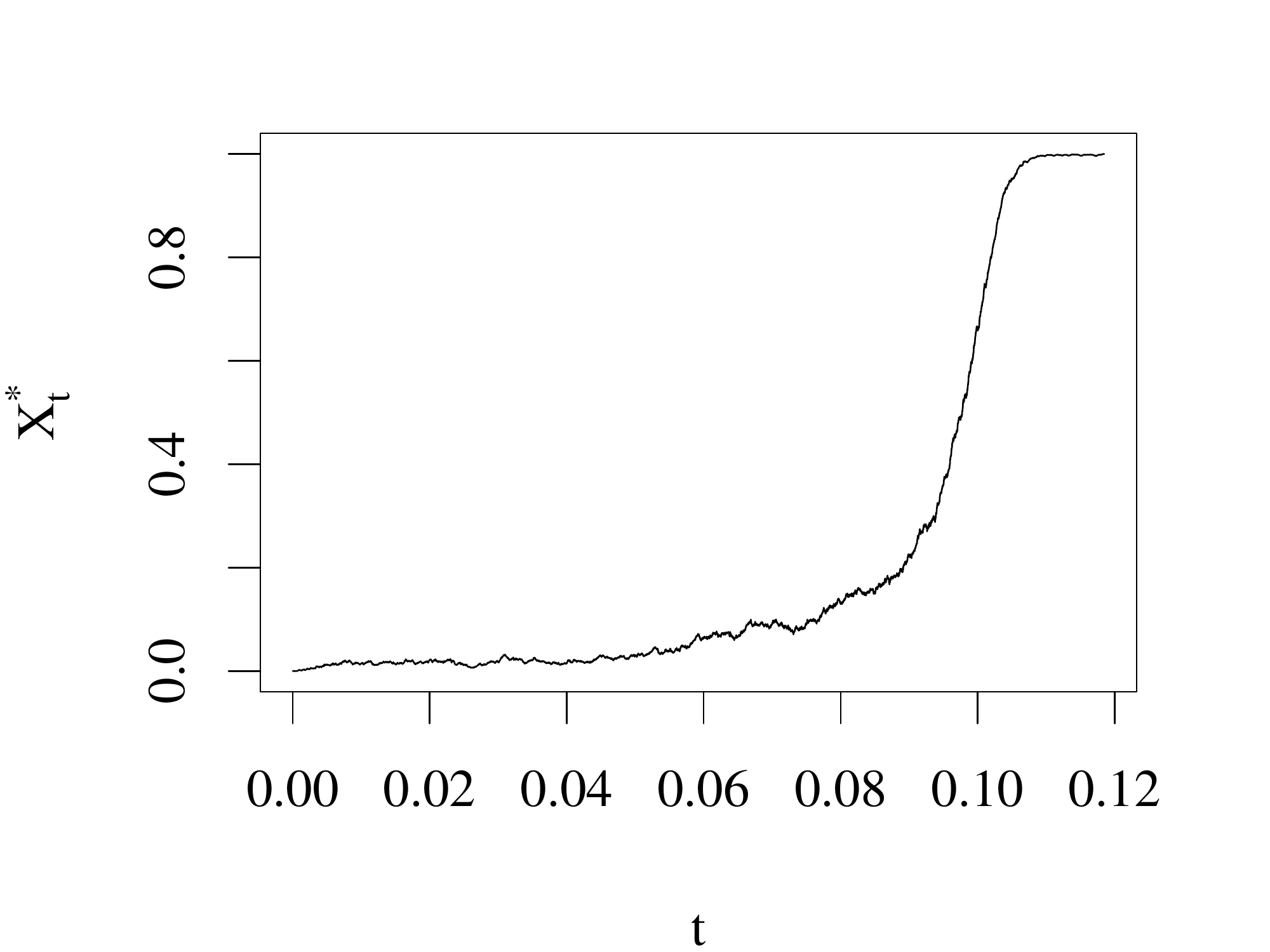}}\hfill%
{\includegraphics[width=.48\textwidth,trim=0 0 50 50,clip]{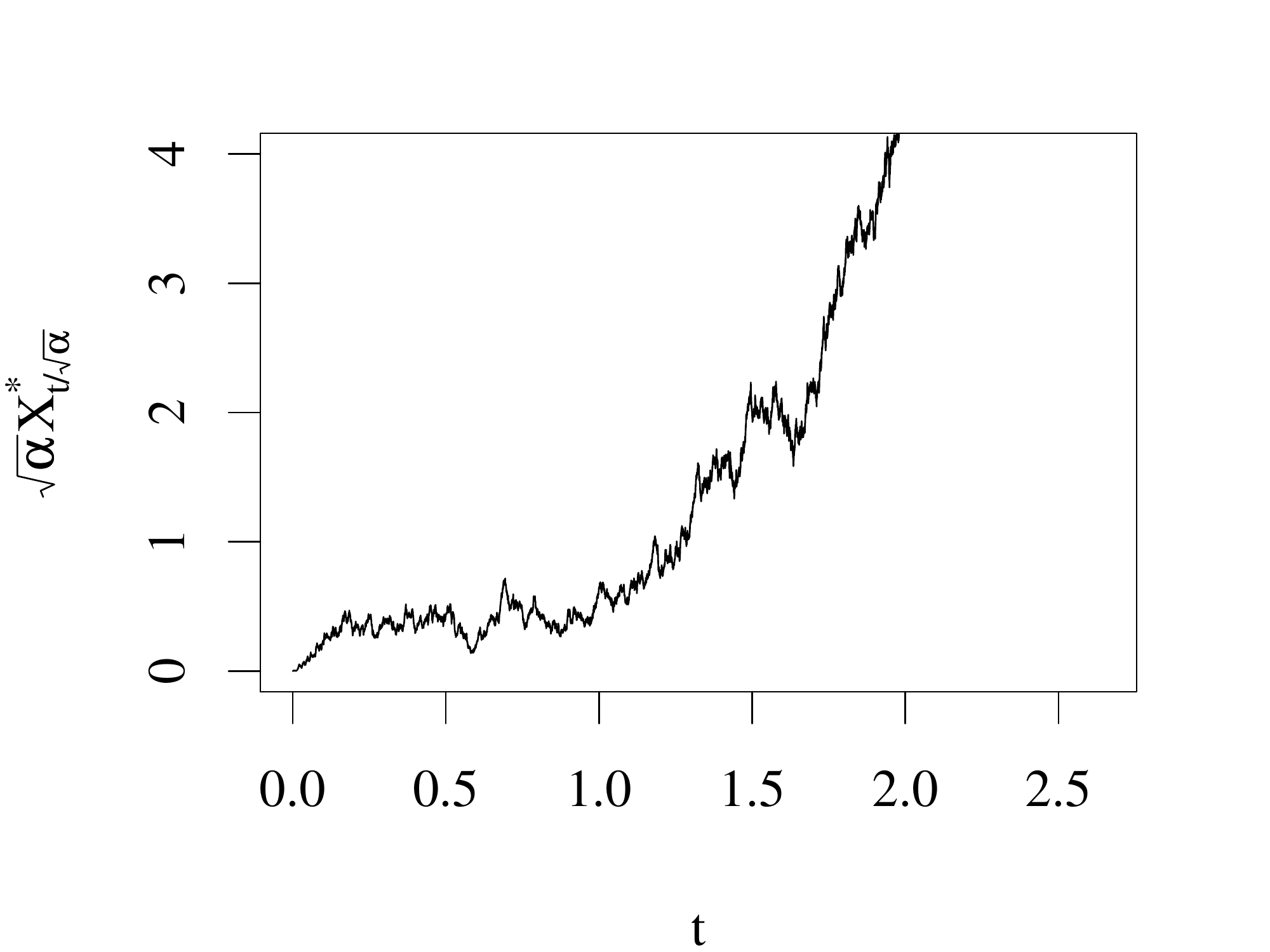}}%
\caption{\label{fig:path}%
The frequency path of a beneficial
    recessive allele, conditioned on fixation, $\mathcal X^\ast$. We
    have used a Wright--Fisher model with $N=5\cdot 10^4$ and $s=5\cdot
    10^{-3}$, i.e.\ $2Ns = 500$. The right figure shows the rescaled
    frequency path $(\sqrt\alpha X^\ast_{t/\sqrt\alpha})_{t\geq 0}$.}
\end{figure}

We give our main three results on fixation probability (Proposition
\ref{P1}) -- actually already derived by \cite{Kimura1962} -- the
duration of the recessive sweep (Theorem \ref{T1}) and the structured
coalescent (Theorem \ref{T2}). All proofs are found in Section
\ref{S:proofs1}.

\subsection{Fixation probability}
The following is a classical fact, which can be read off from
\cite{Kimura1962}, equation (14). It describes the fixation
probability of the beneficial allele and is stated here for
completeness.

\begin{proposition}[Fixation probability]\label{P1}
  Let $\mathcal X$ be as in Definition \ref{def:X}, $\alpha\to\infty$
  and $\varepsilon_\alpha$ be such that $\varepsilon_\alpha 
  \sqrt\alpha \to 0$. Then,\footnote{ For two functions $a_\alpha$ and
    $b_\alpha$ we write $a_\alpha \stackrel{\alpha\to\infty}\approx
    b_\alpha$ iff $ \lim_{\alpha\to\infty} \frac{a_\alpha}{b_\alpha} =
    1$.}
  \begin{align}
    \mathbb P_{\varepsilon_\alpha}^{\alpha, h=0}[T_1<T_0]
    \stackrel{\alpha\to\infty}\approx 2 \varepsilon_\alpha 
    \sqrt{\frac{\alpha}{\pi}}.
  \end{align}
\end{proposition}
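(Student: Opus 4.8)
The plan is to express the fixation probability through the scale function of the diffusion \eqref{eq:SDE} and then extract the $\alpha\to\infty$ behaviour by elementary Gaussian-integral estimates. The diffusion \eqref{eq:SDE} with $h=0$ has drift $\mu(y)=\alpha y^2(1-y)$ and diffusion coefficient $\sigma^2(y)=y(1-y)$, so $2\mu(y)/\sigma^2(y)=2\alpha y$, and a scale function is obtained from $S'(y)=\exp\bigl(-\int_0^y 2\alpha z\,dz\bigr)=e^{-\alpha y^2}$, i.e. one may take $S(x)=\int_0^x e^{-\alpha y^2}\,dy$. Since $S(0)=0$ and $S(1)=\int_0^1 e^{-\alpha y^2}\,dy<\infty$, the boundaries of $(0,1)$ are accessible, absorption at $0$ or $1$ is certain, and the standard hitting-probability identity (see e.g.\ \citealp{Ewens2004}, or read off \citealp{Kimura1962}, eq.\ (14)) gives
\begin{align*}
  \mathbb P_{\varepsilon_\alpha}^{\alpha,h=0}[T_1<T_0]
  = \frac{S(\varepsilon_\alpha)-S(0)}{S(1)-S(0)}
  = \frac{\int_0^{\varepsilon_\alpha} e^{-\alpha y^2}\,dy}{\int_0^1 e^{-\alpha y^2}\,dy}.
\end{align*}

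It then remains to estimate numerator and denominator as $\alpha\to\infty$. For the denominator, the substitution $u=\sqrt\alpha\,y$ gives $\int_0^1 e^{-\alpha y^2}\,dy=\alpha^{-1/2}\int_0^{\sqrt\alpha} e^{-u^2}\,du\to \alpha^{-1/2}\cdot\tfrac{\sqrt\pi}{2}$, hence $\int_0^1 e^{-\alpha y^2}\,dy\stackrel{\alpha\to\infty}{\approx}\tfrac12\sqrt{\pi/\alpha}$. For the numerator, the hypothesis $\varepsilon_\alpha\sqrt\alpha\to 0$ yields the sandwich $e^{-\alpha\varepsilon_\alpha^2}\le e^{-\alpha y^2}\le 1$ for $y\in[0,\varepsilon_\alpha]$, so $\varepsilon_\alpha e^{-\alpha\varepsilon_\alpha^2}\le\int_0^{\varepsilon_\alpha}e^{-\alpha y^2}\,dy\le\varepsilon_\alpha$ and therefore $\int_0^{\varepsilon_\alpha}e^{-\alpha y^2}\,dy=\varepsilon_\alpha(1+o(1))$. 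Dividing the two estimates gives $\mathbb P_{\varepsilon_\alpha}^{\alpha,h=0}[T_1<T_0]\stackrel{\alpha\to\infty}{\approx}\varepsilon_\alpha\big/\bigl(\tfrac12\sqrt{\pi/\alpha}\bigr)=2\varepsilon_\alpha\sqrt{\alpha/\pi}$, which is the assertion.

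There is no genuine obstacle here; the identity for $S$ is classical (it already appears in \citealp{Kimura1962}). The only two points needing a little care are: (i) invoking the scale-function/hitting-probability identity under correct hypotheses — if one wishes to avoid any boundary-classification discussion one can instead compute $\mathbb P_{\varepsilon_\alpha}[T_1<T_\delta]=(S(\varepsilon_\alpha)-S(\delta))/(S(1)-S(\delta))$ on $[\delta,1]$ for $0<\delta<\varepsilon_\alpha$ and let $\delta\downarrow 0$, using $S(0{+})=0$; and (ii) making the $1+o(1)$ in the numerator estimate uniform in the (possibly $\alpha$-dependent) quantity $\varepsilon_\alpha$, which is exactly the sandwich above and uses nothing beyond $\varepsilon_\alpha\sqrt\alpha\to 0$. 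Everything stays within the diffusion framework emphasised in Remark \ref{rem:conv1}, since the statement never refers back to the Wright--Fisher chain.
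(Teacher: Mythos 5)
Your proof is correct and follows essentially the same route as the paper: both reduce the fixation probability to the ratio of Gaussian integrals $\int_0^{\varepsilon_\alpha}e^{-\alpha y^2}dy\big/\int_0^1 e^{-\alpha y^2}dy$ via the scale function (the paper's \eqref{eq:fix1} with $h=0$) and then use $\varepsilon_\alpha\sqrt\alpha\to 0$ together with $\int_0^\infty e^{-u^2}du=\sqrt\pi/2$. The only cosmetic difference is that the paper first passes to the rescaled limit diffusion $\mathcal Y$ of Lemma \ref{lkey}, whereas you perform the equivalent substitution $u=\sqrt\alpha\,y$ directly in the exact scale function, which is if anything slightly more self-contained.
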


\begin{remark}[Fixation probability in a finite population]
  Consider a finite population of diploid size $N$. Assume that $X_0 =
  \tfrac{1}{2N}$, meaning that there is only a single copy of the
  beneficial allele at time $0$. For $\alpha=2Ns$, we find that, as
  $N\to\infty$,
  $$\mathbb P_{1/(2N)}^{\alpha, h=0}\big[T_1<T_0\big] 
  \approx \frac{1}{N}\sqrt{\frac{2Ns}{\pi}} = \sqrt{\frac{2s}{\pi
      N}}.$$ (This is exactly equation (15) in
  \citealp{Kimura1962}). Hence, a new recessive beneficial allele has
  a chance to be fixed in the population, which is much larger than
  $1/{2N}$, its chance if it was neutral, and much smaller than $2hs$,
  its chance if it would not be recessive (compare with Remark
  \ref{rem:42}).  However, note that it is not shown that the fixation
  probability for a single copy of the beneficial allele in a
  Wright--Fisher model has the same limit behavior; compare with Remark
  \ref{rem:conv1}.
\end{remark}

\subsection{Length of the recessive sweep}
Now, we come to the results on the duration of the recessive sweep.

\begin{pptheorem}[Length of the recessive sweep]
  \label{T1}
  Let $T^\ast$ be as in \eqref{eq:Tast}. Then,
  \begin{align}\label{eq:T11a}
    \mathbb E_0^{\alpha, h=0}[T^\ast] - \frac{4
      c_{cat}}{\sqrt{\pi}} \cdot \frac{1}{\sqrt{\alpha}} &
    \stackrel{\alpha\to\infty}\approx \frac{3\log\alpha}{2\alpha},
    \\\label{eq:T12b} \mathbb{V}_0^{\alpha, h=0}[T^\ast] &
    \stackrel{\alpha\to\infty}\approx \frac{c}{\alpha},
  \end{align}
  where $c_{cat} \approx 0.916$ is Catalan's constant and some
  $0<c<\infty$.
\end{pptheorem}

\begin{remark}[Further investigation of \boldmath $T^\ast$]
  \begin{enumerate}
  \item   We find from \eqref{eq:T11a} that
    $$\mathbb E_0^{\alpha, h=0}[T^\ast] \approx \frac{2.067}{\sqrt\alpha} + \frac{3\log\alpha}{2\alpha}$$
    for large values of $\alpha$. For a finite population of size $N$
    and a beneficial recessive allele with selection coefficient $s$,
    this means that it takes $2.067\sqrt{N/s}+3\log(2Ns)/2s$
    generations until fixation of the beneficial allele on average.
    As simulations show, this gives accurate numerical results; see
    Figure \ref{fig:length}(A). In addition, a numerical investigation
    of \eqref{eq:var} reveals that
    \begin{align}
      \mathbb V_0^{\alpha, h=0}[T^\ast] \approx \frac{0.6362}{\alpha}
    \end{align}
    for large $\alpha$. This value also fits well with simulations;
    see Figure \ref{fig:length}(B).
  \item Our calculations reveal not only limits for the expected total
    duration of the sweep, but also of the duration of the allele
    being in low (below some $\varepsilon>0$, not depending on
    $\alpha$), intermediate (between $\varepsilon$ and
    $1-\varepsilon$) and in high (above $1-\varepsilon$)
    frequency. These results are collected in Table \ref{tab:1}.
  \end{enumerate}
\end{remark}

\begin{table}
  \begin{center}
    \begin{tabular}{ l||c|c|c|} 
      \rule[-4mm]{0cm}{1cm}& $0\leq X_t\leq \varepsilon$ &  $\varepsilon<X_t \leq 1-\varepsilon$ & 
      $1-\varepsilon<X_t \leq 1$ \\ \hline\hline
      \rule[-4mm]{0cm}{1cm}$h=0$ & $\displaystyle\frac{4 c_{\text{cat}}}{\sqrt{\pi\alpha}} + 
      \frac{\log\alpha}{2\alpha}$ & 
      $\displaystyle\mathcal O\Big(\frac 1 \alpha\Big)$ 
      & $\displaystyle\frac{\log\alpha}{\alpha}$ \\\hline
      \rule[-4mm]{0cm}{1cm}$0<h<1$ & $\displaystyle\frac{\log\alpha}{h\alpha}$ & $\displaystyle
      \mathcal O\Big(\frac 1 \alpha\Big)$ 
      & $\displaystyle\frac{\log\alpha}{(1-h)\alpha}$ \\\hline
      \rule[-4mm]{0cm}{1cm}$h=1$ & $\displaystyle\frac{\log\alpha}{\alpha}$ & $\displaystyle\mathcal O\Big(
      \frac 1 \alpha\Big)$ 
      & $\displaystyle\frac{\pi^{3/2}}{2\sqrt\alpha} + \frac{\log\alpha}{2\alpha}$ \\\hline
    \end{tabular}
  \end{center}
  \caption{\label{tab:1}The leading terms for the expected duration of
    the time when the frequency of the beneficial allele is low
    (i.e. smaller than some $\varepsilon>0$), intermediate (between
    $\varepsilon$ and $1-\varepsilon$) and high (above
    $1-\varepsilon)$. In case the leading term can only be determined up
    to a factor, we write $\mathcal O(.)$.}
\end{table}

\begin{figure}
  \centering \hspace*{0cm}(A) \hspace{.45\textwidth} (B)\\ 

{\includegraphics[width=.48\textwidth,trim=0 0 30 30,clip]{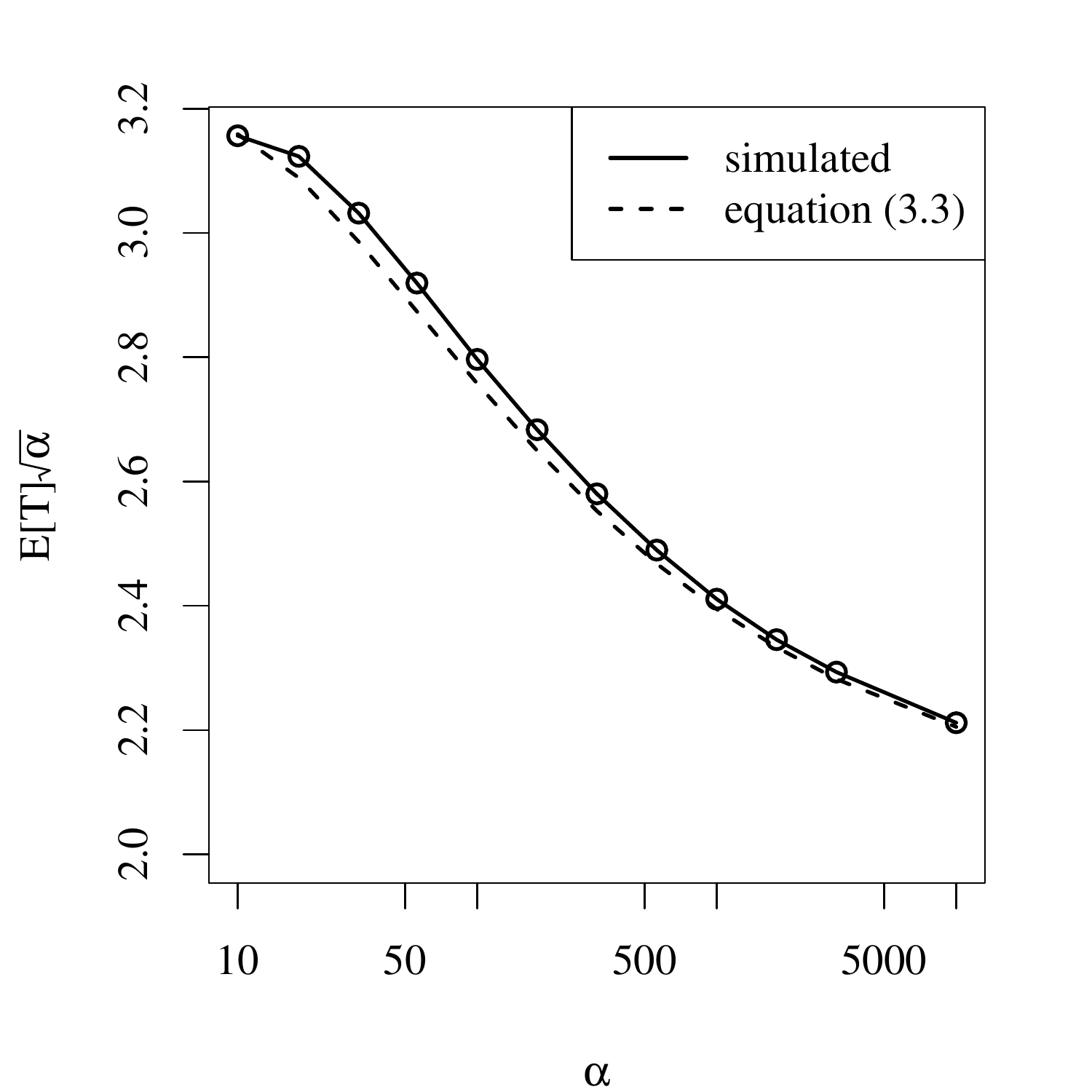}}\hfill
{\includegraphics[width=.48\textwidth,trim=0 0 30 30,clip]{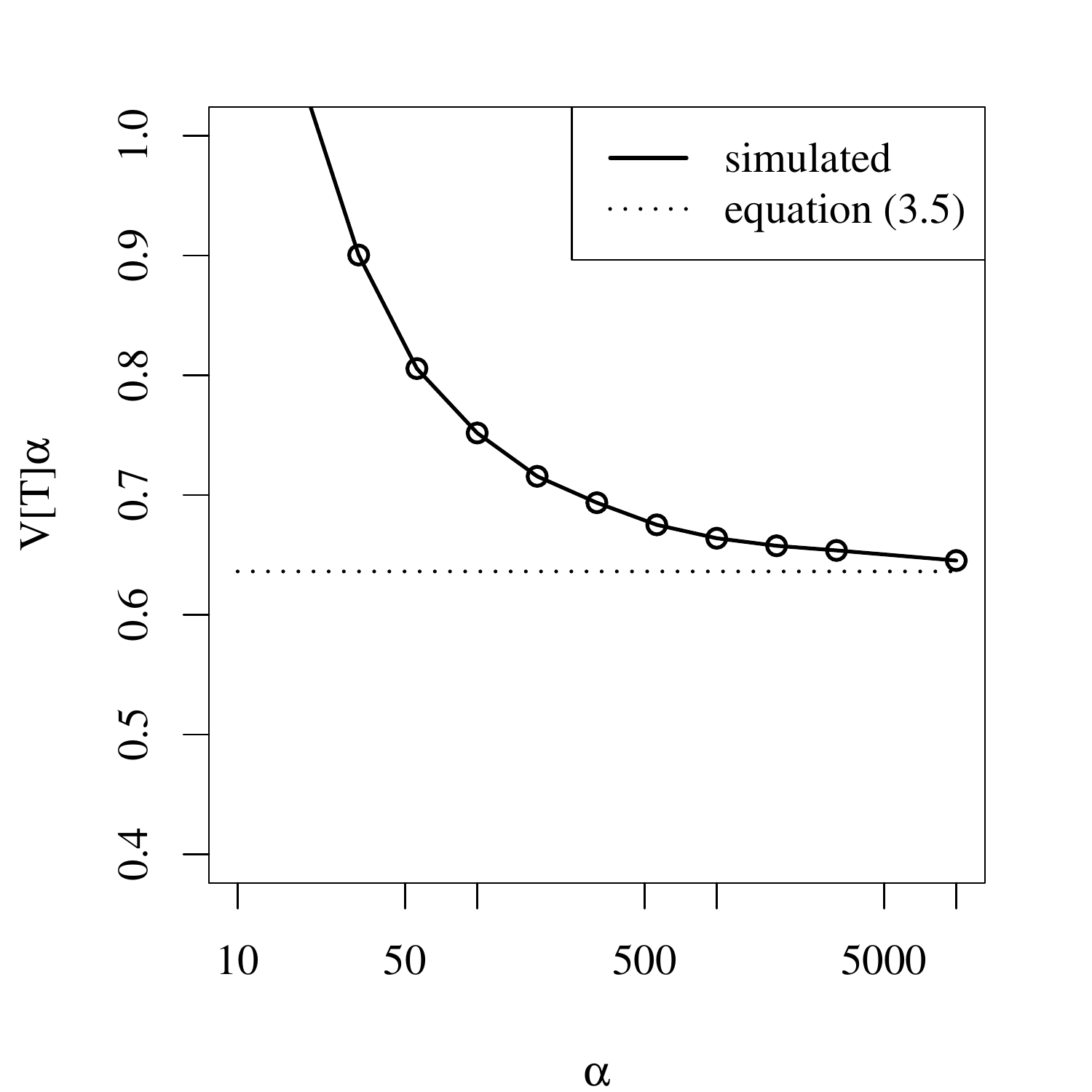}}
  \caption{\label{fig:length}The duration of the sweep of a beneficial
    recessive allele, is plotted against $\alpha=2Ns$. We have used a
    Wright--Fisher model, with $N=10^6$. (A) Average length of the
    duration of the recessive sweep. (B) Empirical variance of the
    duration of the recessive sweep.}
\end{figure}

\begin{remark}[Deterministic approach]
  In the co-dominant case, the fixation process is often approximated
  by a logistic increase of the beneficial allele;
  e.g. \cite{KaplanHudsonLangley1989, StephanWieheLenz1992}. Formally,
  this approximation replaces the SDE (\ref{eq:SDE0}) by the
  corresponding ODE. Calculating the duration of the sweep as the time
  to reach $1-\tfrac{1}{2N}$ from $\tfrac{1}{2N}$ under the logistic
  results is a rough, but acceptable first approximation for the
  fixation time. This is different for the completely recessive case,
  where the corresponding ODE reads $$ \dot y = \alpha y^2(1-y),
  \qquad y_0 = \tfrac{1}{2N}.$$ The time $t^\ast$ to reach $y_{t^\ast}
  = 1-\tfrac{1}{2N}$ from an initial frequency $\tfrac{1}{2N}$ is
  easily derived to be
  $$ t^\ast = \frac{1}{\alpha} \Big( \frac{4N(N-1)}{2N-1} + 2\log(2N-1) \Big)
  \stackrel{N\to\infty}\approx \frac{2N}{\alpha}.$$ Note that the asymptotic behavior for $\alpha \to \infty$
  of this expression differs considerably from the scaling $T^* \sim \alpha^{-1/2}$ in \eqref{eq:T11a} that we have
  found from the full stochastic equation.
\end{remark}

\subsection{The structured coalescent}
In this section, we study the structured coalescent from Section
\ref{S:defSC} for $h=0$.

\begin{pptheorem}[The structured coalescent for \boldmath $h=0$]
  \label{T2} ~\newline
  Let $\mathcal K = (\mathcal K^b_\beta, \mathcal K^w_\beta)_{0\leq
    \beta \leq T^\ast}$ be as in Definition \ref{def:K}, and $\rho =
  \rho_\alpha$ such that $\rho/\sqrt\alpha \to \lambda$ for some
  $0<\lambda<\infty$ as $\alpha\to\infty$. Then,
  \begin{align} \label{T21} \mathbb P_0^{\alpha, h=0}[\mathcal
    K_{T^\ast} = (1,0) | \mathcal K_0 = (1,0)] &
    \stackrel{\alpha\to\infty}\approx \mathbb E_0^{\alpha,
      h=0}[e^{-\lambda (\sqrt\alpha\cdot T^\ast)}],\\ \label{T22}
    \mathbb P_0^{\alpha, h=0}[K^b_{T^\ast} + K^w_{T^\ast} = k |
    \mathcal K_0 = (n,0)] & \stackrel{\alpha\to\infty}\approx
    c_{\lambda,n,k}
  \end{align}
  for some $0<c_{\lambda,n,k}<1$, not depending on $\alpha$.
\end{pptheorem}

~

\begin{remark}[Approximation of the genealogy for \boldmath $h=0$]
  \label{rem:34}
  It is important to note that only the scaling of the recombination
  rate $\rho$ by $\sqrt\alpha$ leads to a non-trivial limit result in
  \eqref{T21} and \eqref{T22}. In applications, however, finite values
  of $\alpha$ and $\rho$ must be assumed. In order to use the above
  Theorem for large $\alpha$, we set $\lambda=\rho/\sqrt\alpha$. In
  this case, the Theorem implies that every single line changes from
  the beneficial to the wildtype background approximately at rate
  $\rho = \lambda\sqrt\alpha$ -- see \eqref{T21} -- during the
  sweep. A first naive guess about the genealogy at the neutral locus
  is a star-like approximation as in the case of co-dominant alleles
  (\citealp{DurrettSchweinsberg2004}, see also Section
  \ref{S:str2}). This means that all lines recombine to the wild-type
  background (transition 3.)  independently at rate $\rho$ and those
  which did not recombine coalesce at the beginning of the
  sweep. However, $\mathcal X^\ast$ spends most time in frequencies of
  order $1/\sqrt\alpha$, leading to an increased rate of coalescence
  during the sweep. We suggest that the process $\mathcal K$ can be
  approximated by a coalescence process (which is not conditioned on
  $\mathcal X^\ast$) where two lines merge at rate $\sqrt\alpha$ and
  each line escapes the sweep at rate $\rho$. As we will see in
  Section \ref{S:div}, this approximation (in contrast to the
  star-like approximation) produces reasonable numerical results and
  helps to explain some of the qualitative differences in frequency
  spectra of recessive and co-dominant sweeps. Note, however, that the
  approximation is not a formal convergence result. (Due to the
  fluctuations in $\mathcal X^\ast$ around $1/\sqrt\alpha$ such a
  formal result is not easily obtained).
\end{remark}

\begin{remark}[Reduction of diversity and \boldmath $c_{\lambda,2,2}$]
  \label{rem:35}
  The most prominent attribute of a selective sweep is
  the reduction in sequence diversity, which can be quantified by
  the average reduction in heterozygosity due to the sweep. 
  Let $H_2(t)$ be the heterozygosity at a neutral locus linked to
  the selected site. Then (see equation (16) in \citealp{KaplanHudsonLangley1989})
  $$  \mathbb P_0^{\alpha, h=0}[H_2(T^\ast)] = p_{22} \cdot \mathbb
  P[H_2(0)],$$ for
  $$  p_{22}:=\mathbb P_0^{\alpha, h=0}[K^b_{T^\ast} + K^w_{T^\ast} = 2 |
  \mathcal K_0 = (2,0)] \approx c_{\lambda, 2,2}.
  $$ 
  Indeed, assuming that there are no new mutations in $[0,T^\ast]$,
  two neutral loci picked at the end of the selective sweep are
  different if the ancestors at the beginning of the sweep are
  different (probability $p_{22}$) and if these carry different
  alleles (probability $\mathbb P[H_2(0)]$).  In particular, $p_{22}$
  captures the reduction of diversity within recessive sweeps. Using
  the approximate genealogy suggested in the last remark and a
  competing Poisson process argument,
  we find
  \begin{align}\label{eq:rea}
    p_{22} \approx \frac{2\rho}{\sqrt\alpha + 2\rho} \approx
    \frac{2\lambda}{1+2\lambda}.
  \end{align}
  We will see in Section \ref{S:div} that this approximation produces
  a reasonable fit to simulations; see Figure \ref{fig7}(A). Note that
  the star-like approximation (described above) would lead to $p_{22}
  \approx \mathbb E[1-\exp(-2\rho T^*)]$, and hence would predict a
  reduction in sequence diversity which is weaker than seen in
  simulations (not shown, but compare with Figure \ref{fig7}(A)).
\end{remark}


\section{Results on incompletely and completely dominant alleles}
\label{S:results2}
In this section we give approximations for the fixation
probability (Proposition \ref{P2}) and the fixation time (Theorem
\ref{T3}) in the case $0<h\leq 1$.

\subsection*{Fixation probability}
The next result is the complement of Proposition \ref{P1} for the case
$0<h\leq 1$.

\begin{proposition}[Fixation probability]\label{P2}
  Let $0<h\leq 1$, and $\varepsilon_\alpha$ be such that
  $\varepsilon_\alpha  \alpha \to 0$ for $\alpha\to\infty$. Then,
  \begin{align}
    \mathbb P_{\varepsilon_\alpha}^{\alpha, h}[T_1<T_0]
    \stackrel{\alpha\to\infty}\approx 2h\alpha \varepsilon_\alpha.
  \end{align}
\end{proposition}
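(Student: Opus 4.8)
The plan is to compute the fixation probability via the standard scale-function formula for the one-dimensional diffusion $\mathcal X$ from Definition~\ref{def:X} and then extract the asymptotics as $\alpha\to\infty$ under the stated scaling $\varepsilon_\alpha\alpha\to 0$. Recall that for a diffusion with generator $\tfrac12 X(1-X)\partial_{xx} + \alpha(h+x(1-2h))x(1-x)\partial_x$, the scale density is $\mathfrak{s}(x) = \exp\big(-\int^x 2\alpha(h+y(1-2h))\,dy\big) = \exp\big(-\alpha(2hx + (1-2h)x^2)\big)$ up to a multiplicative constant, since the drift coefficient divided by half the diffusion coefficient is $2\alpha(h+y(1-2h))$. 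Then
\begin{align}\label{eq:fixprob-scale}
  \mathbb P_{\varepsilon_\alpha}^{\alpha,h}[T_1<T_0] = \frac{\int_0^{\varepsilon_\alpha} \mathfrak{s}(x)\,dx}{\int_0^1 \mathfrak{s}(x)\,dx}.
\end{align}

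First I would analyze the numerator: for $x\in[0,\varepsilon_\alpha]$ with $\varepsilon_\alpha\alpha\to 0$, the exponent $\alpha(2hx+(1-2h)x^2)$ is $O(\varepsilon_\alpha\alpha)\to 0$ uniformly, so $\mathfrak{s}(x) = 1 + o(1)$ on this interval and $\int_0^{\varepsilon_\alpha}\mathfrak{s}(x)\,dx \stackrel{\alpha\to\infty}{\approx} \varepsilon_\alpha$. Second, I would show the denominator $\int_0^1 \mathfrak{s}(x)\,dx$ tends to $\tfrac{1}{2h\alpha}$ as $\alpha\to\infty$. Since $h>0$, the integrand $e^{-\alpha(2hx+(1-2h)x^2)}$ decays exponentially away from $x=0$, so by Watson's lemma / Laplace's method the integral is governed by the behavior near $0$, where $2hx+(1-2h)x^2 = 2hx(1+o(1))$; substituting $u=2h\alpha x$ gives $\int_0^1 e^{-\alpha(2hx+(1-2h)x^2)}\,dx = \tfrac{1}{2h\alpha}\int_0^{2h\alpha} e^{-u}e^{-(1-2h)u^2/(4h^2\alpha)}\,du \to \tfrac{1}{2h\alpha}$. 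Combining the two estimates in \eqref{eq:fixprob-scale} yields $\mathbb P_{\varepsilon_\alpha}^{\alpha,h}[T_1<T_0] \stackrel{\alpha\to\infty}{\approx} 2h\alpha\varepsilon_\alpha$, as claimed.

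The case $h=1$ needs a small separate check only in that $1-2h=-1<0$, so $2hx+(1-2h)x^2 = 2x-x^2 = x(2-x)$ is still increasing on $[0,1]$ and strictly positive for $x\in(0,1]$, hence the same Laplace-type argument applies; the sign of the $x^2$ term does not affect the leading order since near $0$ the linear term dominates. The main obstacle is making the Laplace-method estimate of the denominator fully rigorous with the correct constant — in particular, justifying the interchange of limit and integral (dominated convergence against $e^{-u}$ on $(0,\infty)$ works, after extending the $u$-integral's upper limit to $\infty$ and controlling the Gaussian correction factor, which is bounded by $1$ and converges pointwise to $1$). I would also need to be slightly careful that the $o(1)$ in the numerator estimate is uniform, but this is immediate from $|\mathfrak{s}(x)-1|\le \alpha(2hx+(1-2h)x^2)e^{\alpha|1-2h|\varepsilon_\alpha^2}$ on the shrinking interval. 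Everything else is routine.
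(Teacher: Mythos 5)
Your proof is correct and follows essentially the same route as the paper: the scale-function representation \eqref{eq:fix1}, the observation that the numerator is $\approx\varepsilon_\alpha$ because the exponent is $O(\alpha\varepsilon_\alpha)\to0$ there, and a Laplace-type evaluation of the denominator $\int_0^1 e^{-\alpha(2hx+(1-2h)x^2)}dx\approx\tfrac{1}{2h\alpha}$ (the paper rescales by $\alpha$ and invokes \eqref{eq:app1} for exactly this). One small correction: your dominating factor $e^{-u}$ only works for $h\le\tfrac12$, since for $h>\tfrac12$ the Gaussian correction $e^{(2h-1)u^2/(4h^2\alpha)}$ exceeds $1$; however, on $[0,2h\alpha]$ the full integrand is still bounded by $e^{-u/(2h)}$, so the dominated-convergence step survives, consistent with your remark that the exponent $2hx+(1-2h)x^2$ stays increasing and positive on $(0,1]$.
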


\begin{remark}[Fixation probability in a finite population]
  \label{rem:42}
  For a finite population of haploid size $2N$, and if the homozygote
  for the beneficial allele has selective advantage $\alpha = 2Ns$,
  the above result means that
  $$\mathbb P_{1/(2N)}^{\alpha, h}\big[T_1<T_0 \big] 
  \approx 2hs.$$ Hence, the fixation probability of an incompletely
  dominant allele with dominance $h$ and advantage $s$ is
  approximately the same as for a co-dominant allele with advantage
  $2hs$.
\end{remark}

\subsection{Length of the sweep}
Next, we give our results on the duration of the sweep for $0<h\leq
1$.

\begin{pptheorem}[Length of the sweep]
  \label{T3}
  \begin{enumerate}
  \item Let $0<h< 1$ and $T^\ast$ be as in \eqref{eq:Tast}. Then,
    \begin{align}\label{eq:T11}
      \lefteqn{\mathbb E_0^{\alpha, h}[T^\ast] - \frac{\log(2\alpha)}{\alpha
        h(1-h)} }\quad
\nonumber\\
& \stackrel{\alpha\to\infty}\approx \frac{1}{\alpha
        h(1-h)} \Big( \gamma + (2-3h)\log h + (3h-1)\log(1-h)\Big)
    \end{align}
    where $\gamma$ is Euler's constant, and
    $$ \mathbb V_0^{\alpha, h}[T^\ast] \stackrel{\alpha\to\infty}\approx
    \frac{1}{\alpha^2}\Big( \frac{c'}{h^2} +
    \frac{c''}{(1-h)^2}\Big) $$ for some $0<c',c''<\infty$, not
    depending on $h$.
  \item 
    For $h=1$,
    \begin{align}\label{eq:T12}
      \mathbb E_0^{\alpha, h=1}[T^\ast] -
      \frac{\pi^{3/2}}{2\sqrt\alpha} \stackrel{\alpha\to\infty}\approx
      \frac{3\log\alpha}{2\alpha}.
    \end{align}
    and 
    $$ \mathbb V_0^{\alpha, h=1}[T^\ast] \stackrel{\alpha\to\infty}\approx \frac{c'''}{\alpha}$$
    for some $0<c'''<\infty$.
  \end{enumerate}
\end{pptheorem}

\begin{remark}[Further investigation of \boldmath $T^\ast$]
  \begin{enumerate}
  \item The approximate expected duration of a sweep for $0<h<1$ from
    \eqref{eq:T11} has already been obtained in
    \cite{vanHerwaarden2002} (using other methods) and we only give
    these results for completeness. Comparison with numerical results
    (Figure \ref{fig:length2}) show that \eqref{eq:T11} is accurate as
    long as $h\alpha \gg 1$ and $(1-h)\alpha \gg 1$, i.e.\ dominance
    is intermediate. For nearly recessive alleles ($h\alpha$ of order
    1) or nearly dominant alleles ($(1-h)\alpha$ of order 1), however,
    $T^*$ is much better approximated by the formulas for $h=0$
    (\ref{eq:T11a}) or $h=1$ (\ref{eq:T12}).
  \item Although \eqref{eq:T11} is symmetric under the exchange of $h$
    and $(1-h)$, this does not reflect an exact identity of the
    fixation time. As noted by \cite{vanHerwaarden2002}, an exact
    symmetry of the fixation process under the Wright--Fisher diffusion
    is obtained under the map $(\alpha, h)\mapsto(-\alpha, 1-h)$,
    i.e.\ if fixation of a dominant beneficial allele is compared with
    a recessive deleterious one.  For beneficial alleles, the
    approximate symmetry under $h \mapsto (1-h)$ becomes much worse
    for nearly recessive or dominant alleles, as already observed by
    \cite{pmid16219788}. From our formulas for the fixation times of
    completely recessive and dominant alleles, \eqref{eq:T11a} and
    \eqref{eq:T12}, this asymmetry is most obvious. In particular, the
    expected length of a dominant sweep ($h=1$)
    \begin{align*}
      \mathbb E_0^{\alpha, h=1}[T^\ast] \approx
      \frac{2.7841}{\sqrt\alpha} + \frac{3\log\alpha}{2\alpha}.
    \end{align*}
    is much longer than any sweep of an allele with the same
    homozygous advantage but $1 > h \ge 0$.  Most of this time is
    spent near frequency 1. For the variance, numeric integration in
    the last line of \eqref{eq:var2} results in
    \begin{align}
      \label{eq:varhone}
      \mathbb V_0^{\alpha, h=1}[T^\ast] \approx \frac{1.818}{\alpha},
   \end{align}
   which is also larger than for $h=0$. See Figure \ref{fig:length3}
   for comparison with simulations.
 \item As for the recessive case, our calculations also reveal the
   expected times for three phases of the sweep, with the allele being
   in low, intermediate and high frequency (cf.\ Table \ref{tab:1}).
  \end{enumerate}
\end{remark}

\begin{figure}[htb!]
  \centering \hspace*{0cm}(A) \hspace{.45\textwidth} (B)\\ 
\fboxsep0pt
{\includegraphics[width=.48\textwidth,trim=0 0 32 32,clip]{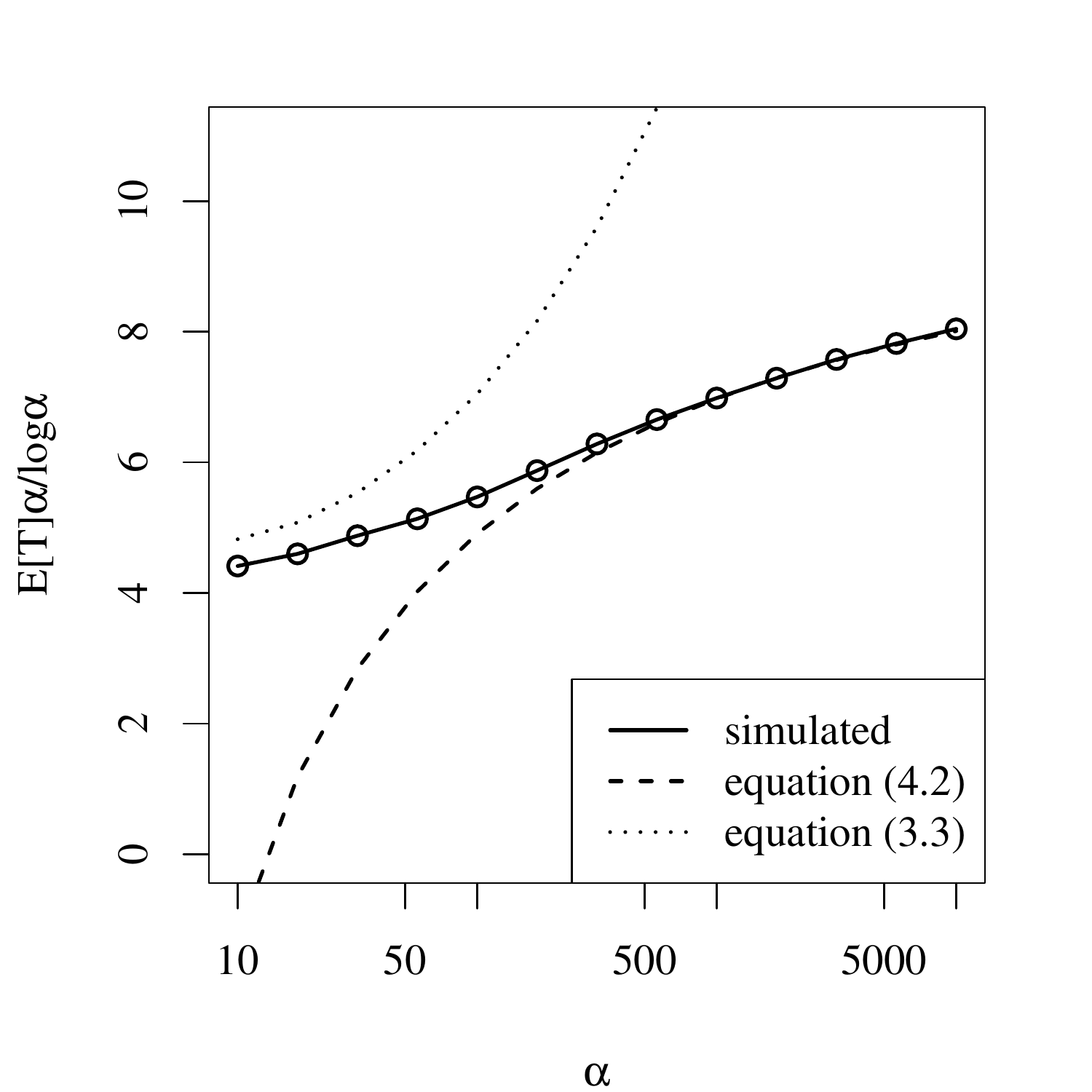}}\hfill
{\includegraphics[width=.48\textwidth,trim=0 0 32 32,clip]{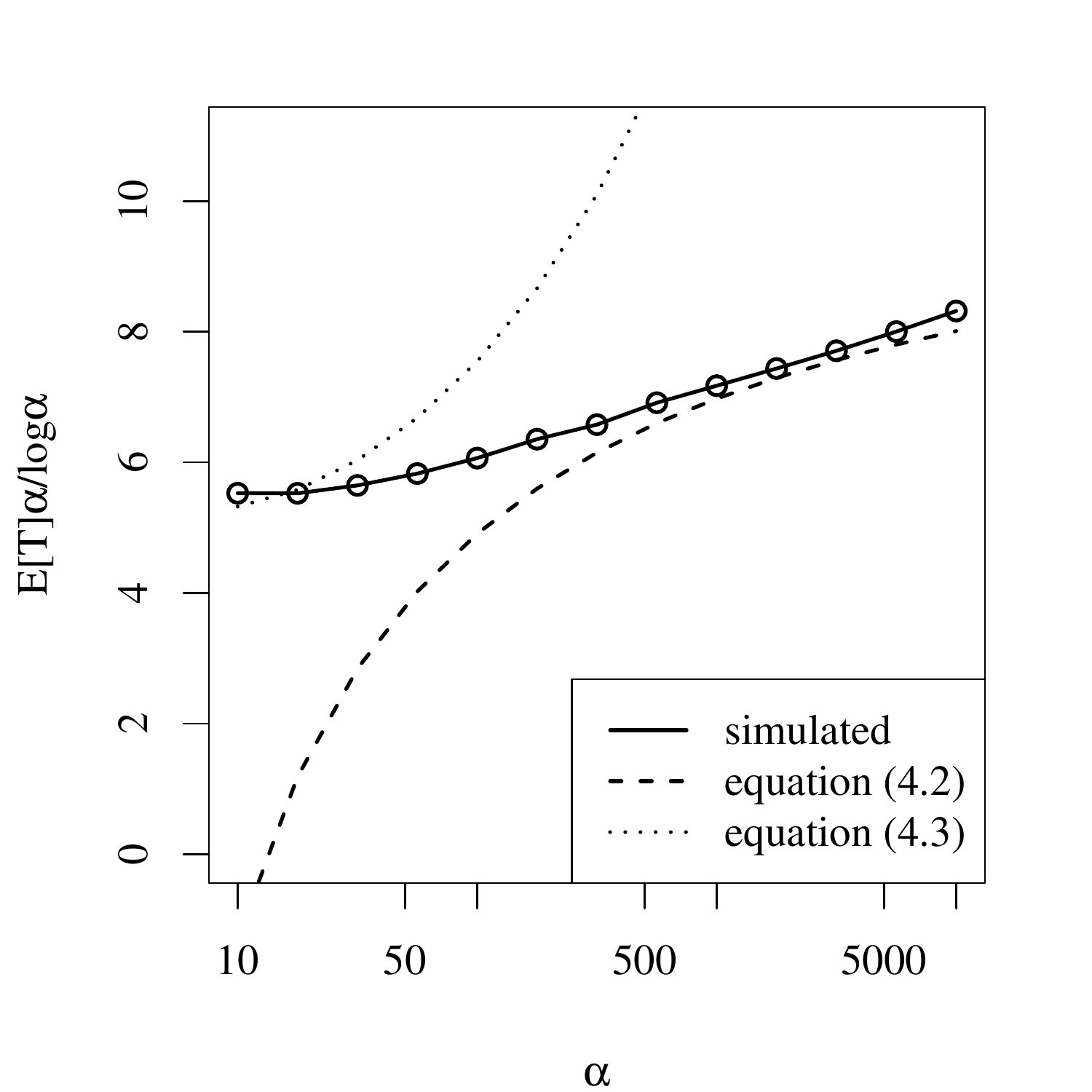}}

\caption{\label{fig:length2}%
The duration of the sweep of a
    beneficial allele is plotted against $\alpha$. We have used the
    same parameters (except for dominance) as in Figure
    \ref{fig:length}. (A) $h=0.1$, (B) $h=0.9$. For $h\alpha < 10$ (in
    (A), the nearly recessive case) and $(1-h)\alpha < 10$ (in (B),
    the nearly dominant case), the simulation curve crosses over to
    the predicted sweep times for $h=0$ \eqref{eq:T11a} and $h=1$
    \eqref{eq:T12}, respectively. }
\end{figure}

\begin{figure}[htb!]
  \begin{center}
    \centering \hspace*{0cm}(A) \hspace{.45\textwidth} (B)\\

\fboxsep0pt
{\includegraphics[width=.48\textwidth,trim=0 0 32 32,clip]{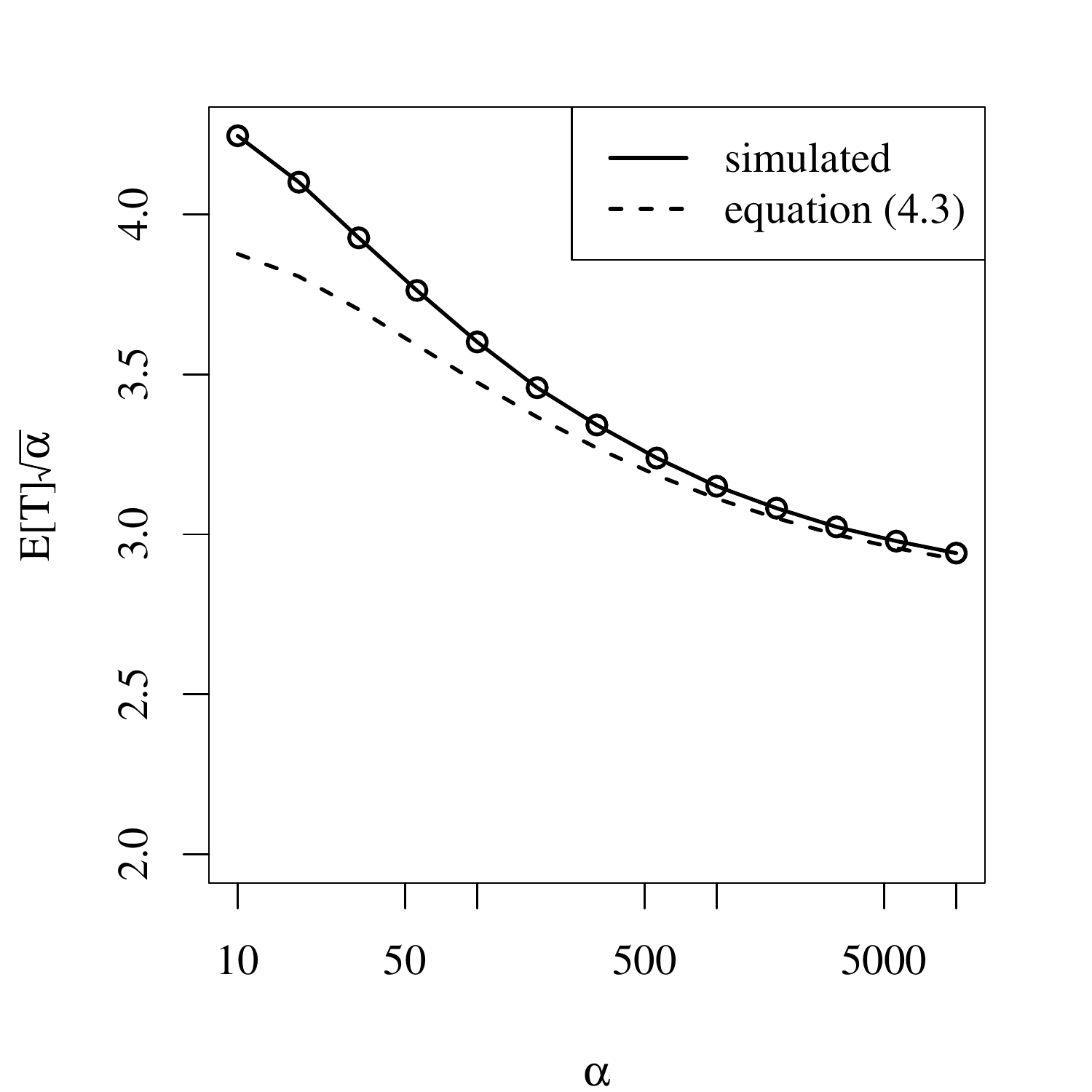}}\hfill
{\includegraphics[width=.48\textwidth,trim=0 0 32 32,clip]{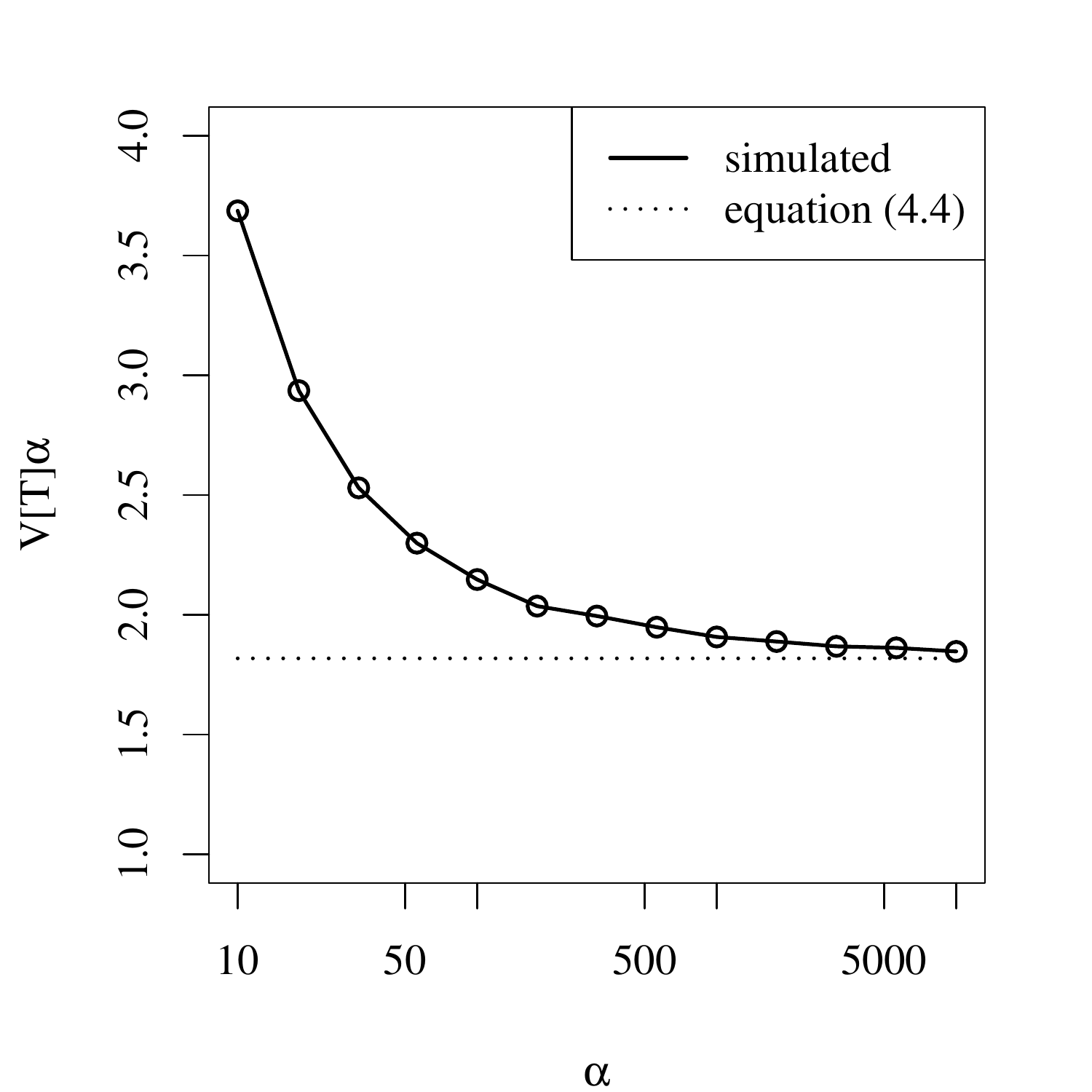}}
  \end{center}
  \caption{\label{fig:length3}(A) The duration of the sweep of a
    beneficial dominant allele, $h=1$, is plotted against
    $\sqrt\alpha$. We have used the same parameters (except for
    dominance) as in Figure \ref{fig:length}. (A) expectation, (B)
    variance.}
\end{figure}

\subsection{The structured coalescent}
\label{S:str2}
The approximation of the coalescent at the neutral locus in a sweep
region by a star-like genealogy is widely used. A formal convergence
result for the co-dominant case is due to
\cite{DurrettSchweinsberg2004}. Using our results from Section
\ref{S:proofs}, this result can be extended to the general
non-recessive case with $0<h< 1$.

~

\begin{pptheorem}[The structured coalescent for \boldmath $0<h<1$]
\label{T2h}
Let $0<h< 1$, $\mathcal K = (\mathcal K^b_\beta, \mathcal
K^w_\beta)_{0\leq \beta \leq T^\ast}$ be as in Definition \ref{def:K},
and $\rho = \rho_\alpha$ such that $\rho \log\alpha / \alpha \to
\lambda$ for some $0<\lambda<\infty$ as $\alpha\to\infty$. Then,
  \begin{align} \label{T21h} \mathbb P^{\alpha, h}[\mathcal K_{T^\ast}
    = (1,0) | \mathcal K_0 = (1,0)] &
    \stackrel{\alpha\to\infty}\approx e^{-\lambda/h},\\ \label{T22h}
    \mathbb P^{\alpha, h}[K^b_{T^\ast} + K^w_{T^\ast} = 2 | \mathcal
    K_0 = (2,0)] & \stackrel{\alpha\to\infty}\approx 1 -
    e^{-2\lambda/h}.
  \end{align}
\end{pptheorem}

\begin{remark}[Star-like genealogy, scaling and dominant
  alleles]\label{rem:44}
  \begin{enumerate} 
  \item Again, it is crucial to note that only by the scaling of
    $\rho$ by $\alpha/\log\alpha$ a non-trivial limit for the
    structured coalescent arises.
  \item The Theorem must be read as follows: any single line
    recombines out of the sweep (i.e.\ experiences a transition 3.,
    compare with Definition \ref{def:K}) with approximate probability
    $1-e^{-\lambda/h}$. Moreover, any two lines behave independently
    and coalesce if and only if both lines do not recombine. In
    particular, this interpretation shows the star-like approximation
    for the genealogy at the neutral locus.
  \item It is now possible to compare the reduction in diversity of a
    beneficial allele with dominance $h$ and selection coefficient
    $\alpha_h$ and a co-dominant beneficial allele with selection
    coefficient $\alpha_{1/2}$. As the Theorem shows, the reduction in
    diversity only depends on $\rho \log\alpha/(\alpha h)$. Hence,
    ignoring logarithmic terms, the diversity reduction is
    approximately the same if $2h\alpha_h= \alpha_{1/2}$. In other
    words, the effect of a beneficial allele with selection
    coefficient $\alpha$ and dominance $h$ is similar as for a
    co-dominant one with selection coefficient $2h\alpha$.
  \item We can approximate the expected reduction in heterozygosity at
    the end of a sweep in the case $0<h< 1$. Using the same notation
    as in Remark \ref{rem:35}, with $\rho \alpha/\log\alpha \to
    \lambda$ as in the above result, we find that
    \begin{align}
      \label{eq:rea1}
      \mathbb P_0^{\alpha,h}[H_2(T^\ast)] & \approx
      \big(1-e^{-2\lambda/h}\big)\cdot \mathbb P_0^{\alpha,h}[H_2(0)]
	\nonumber\\
      & \approx \big(1-e^{-2\rho \log\alpha/(\alpha h)}\big)\cdot
      \mathbb P_0^{\alpha,h}[H_2(0)].
    \end{align}
  \item The proof uses (and establishes) the fact that for $\alpha \to
    \infty$ recombination and coalescence can only happen while the
    frequency of the beneficial allele is small ($X_t < \epsilon$,
    first phase in Table 1).  Note that the result uses the scaling of
    the recombination rate $\rho \sim \alpha/\log\alpha$. This is in
    stark contrast to the recessive case where $\rho$ must be of order
    $\sqrt\alpha$.
  \item The above theorem does not cover the case $h=1$. For such a
    dominant sweep, recall from Theorem \ref{T3} that the duration of
    the sweep is of the order of $1/\sqrt\alpha$ in the diffusion time
    scale. However, this order is due to the final phase of the
    fixation process, where the frequency of the beneficial allele is
    near $1$. During this final phase, it is possible that additional
    recombination events to the wild-type background occur, plus
    back-recombination events to the beneficial background. In
    particular, (i) in the proof of Theorem \ref{T2h} is not true in
    the case $h=1$.
  \item Our result shows that methods, which rely on the star-like
    approximation (such as \cite{NielsenEtAl2005}), can easily be
    adapted to the case of intermediate dominance. In contrast, they
    are expected to fail for recessive sweeps, i.e.\ $h\alpha$ of
    order 1, even if selection is very strong. Note that the scaling
    result for the star-like approximation uses the limit $\alpha \to
    \infty$ for constant $h$ and should therefore only be applied to
    cases where $h\alpha \gg 1$.
  \end{enumerate}
\end{remark}

\section{Simulations}
\label{S:div}
In this section, we show simulation results indicating the
implications of our findings for data analysis. All simulations were
done using the program {\it msms} \citep{Ewingmsms2010}. This program
is able to simulate samples of arbitrary size of homologous genetic
material of any sequence length for arbitrary selection scenarios on a
single bi-allelic locus, including temporally or spatially
heterogeneous selection and dominance.

We concentrate our simulations on the reduction of sequence diversity
at the neutral locus. In particular, we compare the effects of
selective sweeps for recessive and co-dominant alleles. Since the same
value for the selection strength $\alpha$ will lead to a much weaker
genetic footprint for the recessive case, $h=0$, as compared to
$h=0.5$, we give a heuristic argument in Section \ref{S:51} that
describes which values of $\alpha$ correspond to a footprint with a
given strength for recessive and co-dominant alleles. The resulting
sequence diversity is described in Section \ref{S:52}. A power
analysis of Tajima's $D$, a common test statistic to reject the
standard neutral model, is given Section \ref{S:53}.

\subsection{Comparison of recessive and co-dominant sweeps}
\label{S:51}
The difference of recessive and co-dominant sweeps is best seen from
the approximations of the genealogy at the neutral locus; compare
Remarks \ref{rem:35} and \ref{rem:44}. In particular, we have seen
that a star-like approximation can be used for strong selection and
$h>0$ (see \eqref{eq:rea1}), while the case of $h=0$ is best described
by competing Poisson processes for recombination and coalescence with
constant rates; see (\ref{eq:rea}). Moreover, note that
\eqref{eq:rea1} and \eqref{eq:rea} already give first-order estimates
for the reduction of heterozygosity due to a selective sweep.

We can ask, which selection strength $\alpha_0$ for a recessive allele
is needed to obtain the same expected reduction in heterozygosity as a
co-dominant sweep with a given selection coefficient
$\alpha_{1/2}$. Equating \eqref{eq:rea} and \eqref{eq:rea1},
\begin{align*}
  \frac{2\rho}{\sqrt\alpha_{0} + 2\rho} = 1 -
  e^{-4\rho\log\alpha_{1/2}/\alpha_{1/2}},
\end{align*}
we obtain the following condition for small $\rho$, 
\begin{align}\label{eq:rea3}
  \alpha_0 \approx \Big(\frac{\alpha_{1/2}}{2\log\alpha_{1/2}}\Big)^2.
\end{align}
We note that this relation produces sweeps of almost identical total
length. Indeed, equating the leading order terms for the fixation
times \eqref{eq:T11a} and \eqref{eq:T12} leads to a selection
coefficient for the recessive case of $\tilde{\alpha}_0 \approx 1.07
\alpha_0$, with $\alpha_0$ given in \eqref{eq:rea3}.  In our
simulations, we will use the particular pair $\alpha_{1/2}=1000$ and
$\alpha_{0}=5300,$ which approximately fulfill \eqref{eq:rea3}.

\subsection{The reduction in sequence diversity}
\label{S:52}
A commonly used measure for sequence diversity is the nucleotide
diversity, which we denote by $\widehat\theta_T$. For a sample of $n$
homologous sequences of neutral loci,
$$ \widehat\theta_T = \frac{1}{\binom n 2} \sum_{1\leq i < j \leq n} H_{ij}.$$
Here, $H_{ij}$ is the number of differences between the $i$th and
$j$th locus. Recall that $\widehat\theta_T$ is an unbiased estimator
for the population mutation rate $\theta$ under the standard neutral
model \citep{Tajima1983}.

~

Note that $\mathbb E[\widehat\theta_T] = \mathbb E[H_2]$, where $H_2$
is the heterozygosity in the population. Hence, we can use
\eqref{eq:rea1} and \eqref{eq:rea} for a prediction of
$\widehat\theta_T$ at the end of the selective sweep. We have tested
these predictions using coalescent simulations under the infinite
sites model of a stretch of DNA for a sample of size $n=50$ for
several pairs of selection coefficients according to
\eqref{eq:rea3}. Results for the particular pair $\alpha_{1/2}=1000$
and $\alpha_{0}=5300$ are shown in Figure \ref{fig7}. The match for
the reduction in heterozygosity is good as long as $\rho\ll\alpha$
(compare the slope for $\widehat\theta_T$ near $\rho = 0$ for $h=0$
and $h=0.5$). For larger values of $\rho$, there are several competing
forces and all approximations show deviations. For $h = 0.5$, the
star-like approximation overestimates the length of the genealogy and
gives a too small valley of reduced heterozygosity. For $h = 0$, our
Poisson process approximation assumes that we can neglect fluctuations
of the frequency of the beneficial allele around $1/\sqrt\alpha$. The
simulations show that a too broad valley of reduced heterozygosity is
predicted (see Figure \ref{fig7}(A)). Note that the star-like
approximation for $h=0.5$ and the Poisson process approximation for
$h=0$, give errors of comparable magnitude.



~

\subsection{The power of Tajima's  $D$ }
\label{S:53}
Another unbiased estimator for the population mutation rate $\theta$
is given by Watterson's $\theta$,
$$\widehat \theta_W = \frac{S}{\sum_{i=1}^{n-1} \tfrac 1i},$$
where $S$ is the total number of single nucleotide polymorphisms
(SNPs) in all $n$ lines \citep{Watterson1975}. In order to reject the
neutral model, we consider the frequently used statistics Tajima's $D$
\citep{Tajima1989}. This statistics was one of the first test
statistics for neutrality tests and is proportional to $\widehat\theta_T -
\widehat \theta_W$. It is a classical result that Tajima's $D$ is
negative for selective sweeps.

The simulations show a more shallow increase for $\widehat \theta_W$
in the recessive case (see Figure \ref{fig7}(A)). This implies a
smaller difference of $\widehat \theta_W$ and $\widehat\theta_T$ and
hence a less negative value for Tajima's $D$ relative to sweep with
$h=0.5$, as seen in Figure \ref{fig7}(B). Also, this finding is easily
understood from the approximated coalescent histories. With a
star-like genealogy, only single lines of descent recombine out of the
sweep, leading to an excess of singletons (more generally: low
frequency polymorphisms) relative to the neutral expectation. This
bias in the frequency spectrum is indicated by a shift to negative
values of Tajima's $D$. In contrast, coalescence events will often
occur prior to recombination events under a competing Poisson process
scheme, valid for recessive sweeps. As a consequence, lines of descent
that recombine out of the sweep will often have multiple descendants
among the sequences in the sample. The shift towards low frequency
polymorphisms is therefore less pronounced and Tajima's $D$ less
negative. Similar results have previously been reported for moderately
recessive sweeps ($h=0.1$) by \cite{pmid16219788}.

The approximate genealogy with constant coalescence and recombination
rates is equivalent to a model of a structured population with
constant migration among two islands representing the beneficial and
the ancestral background. Depending on the migration rate (and the
sampling scheme) both a surplus ($D<0$) and a deficit ($D>0$) of low
frequency polymorphisms can result under this scenario. For the
recessive sweep, we see that the expected Tajima's $D$ indeed turns
positive at a larger recombination distance to the selected site.

Since the deviation of Tajima's $D$ from zero (in either direction) is
relatively smaller for recessive sweeps, we expect that also the power
to detect selection in a simple test based on $D$ is reduced. This is
confirmed by the power table in Figure \ref{fig8}, consistent with
previous results on partially recessive sweeps by \cite{pmid16687733}.

\begin{figure}
  \centering \hspace*{0cm}(A) \hspace{.45\textwidth} (B)\\ 
\fboxsep0pt
{\includegraphics[width=.48\textwidth,trim=0 0 34 34,clip]{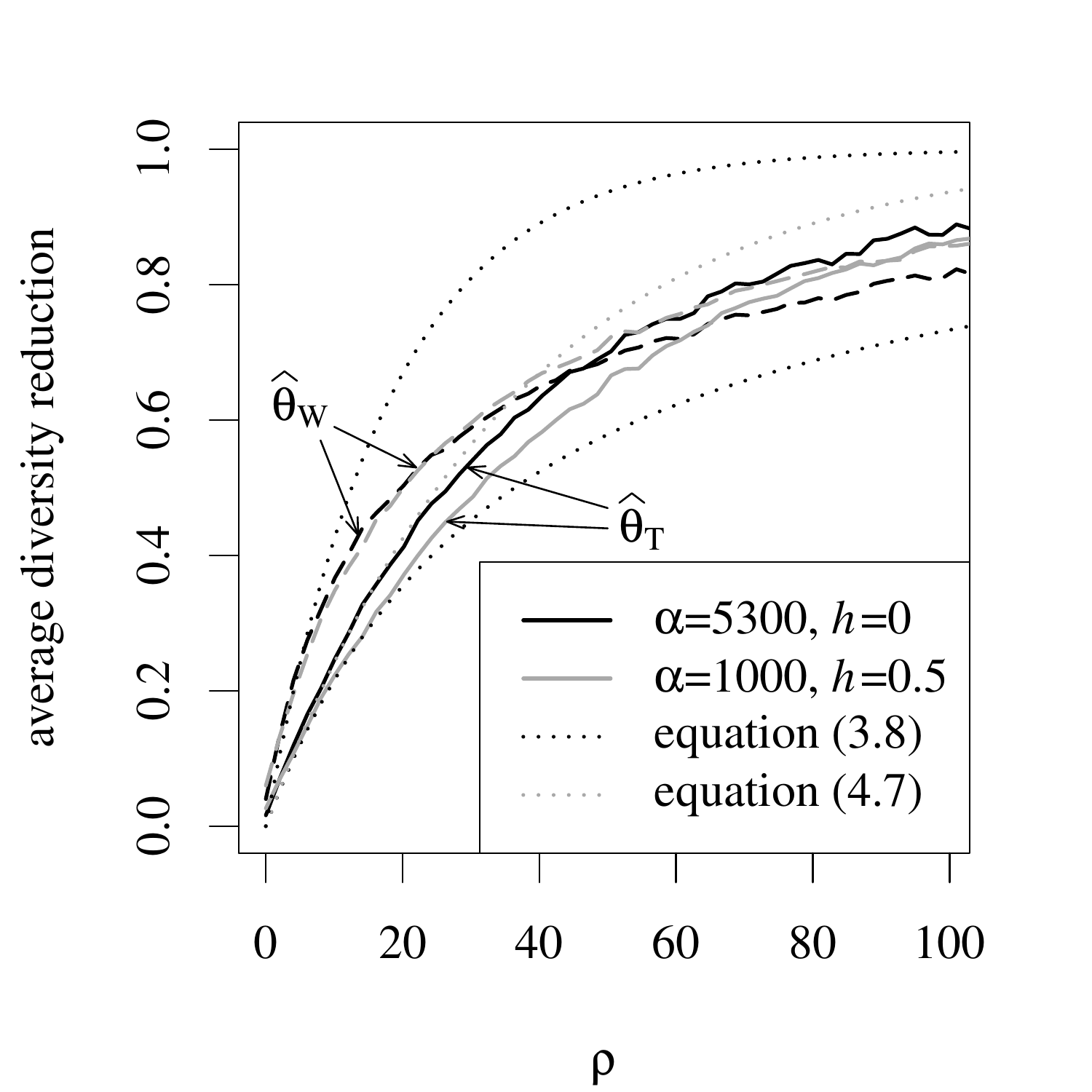}}\hfill
{\includegraphics[width=.48\textwidth,trim=0 0 34 34,clip]{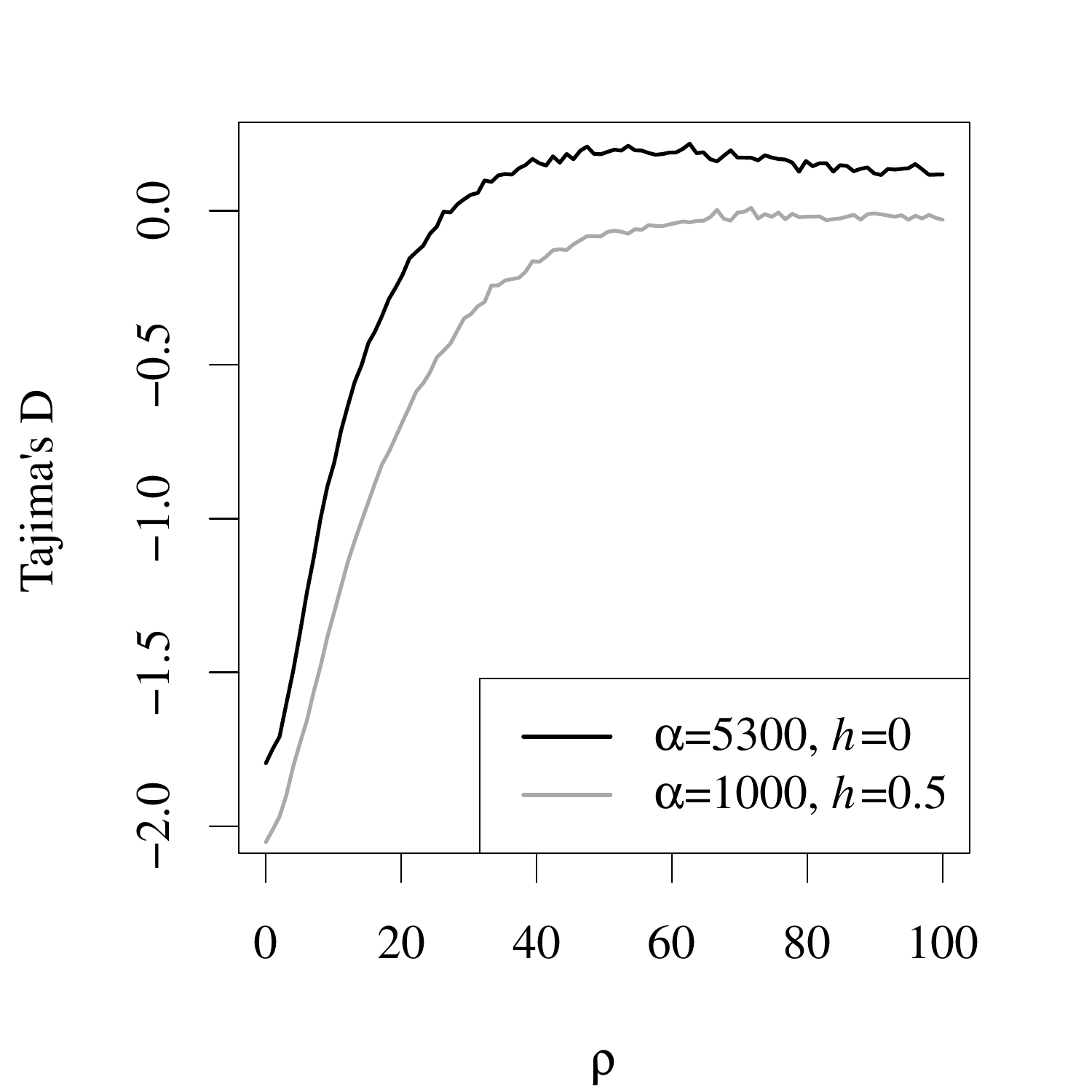}}
  \caption{\label{fig7}%
(A) The average reduction of nucleotide
    diversity $\widehat\theta_T$ and Watterson's estimator
    $\widehat\theta_W$ for the population mutation rate $\theta$ after
    a sweep. (B) Tajima's $D$ as a function of the recombination
    distance, $\rho$, for co-dominant and recessive sweeps. }
\end{figure}

\begin{figure}
  \centering \hspace*{0cm}(A) \hspace{.45\textwidth} (B)\\

\fboxsep0pt 
{\includegraphics[width=.48\textwidth,trim=0 0 20 15,clip]{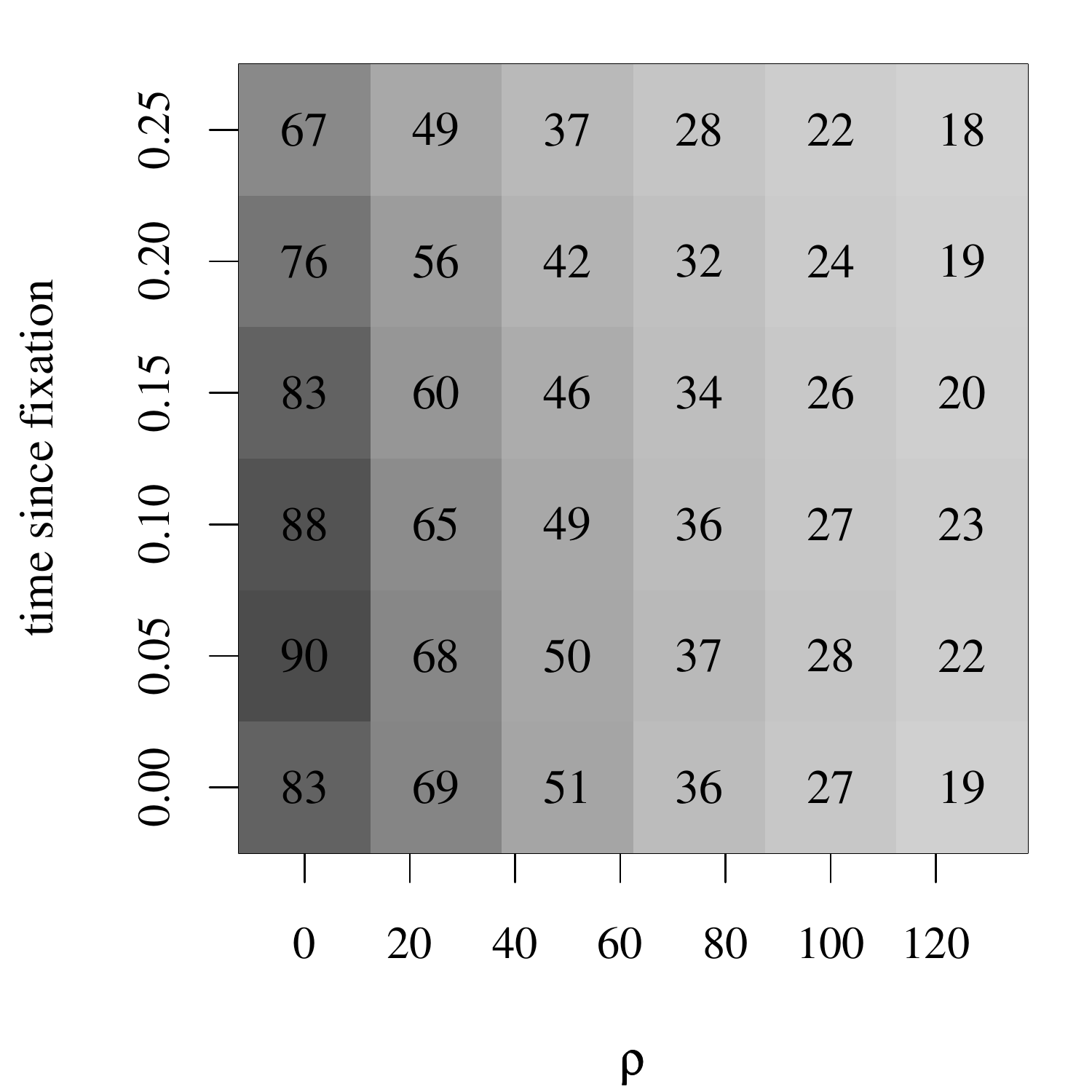}}%
\hfill%
{\includegraphics[width=.48\textwidth,trim=0 0 20 15,clip]{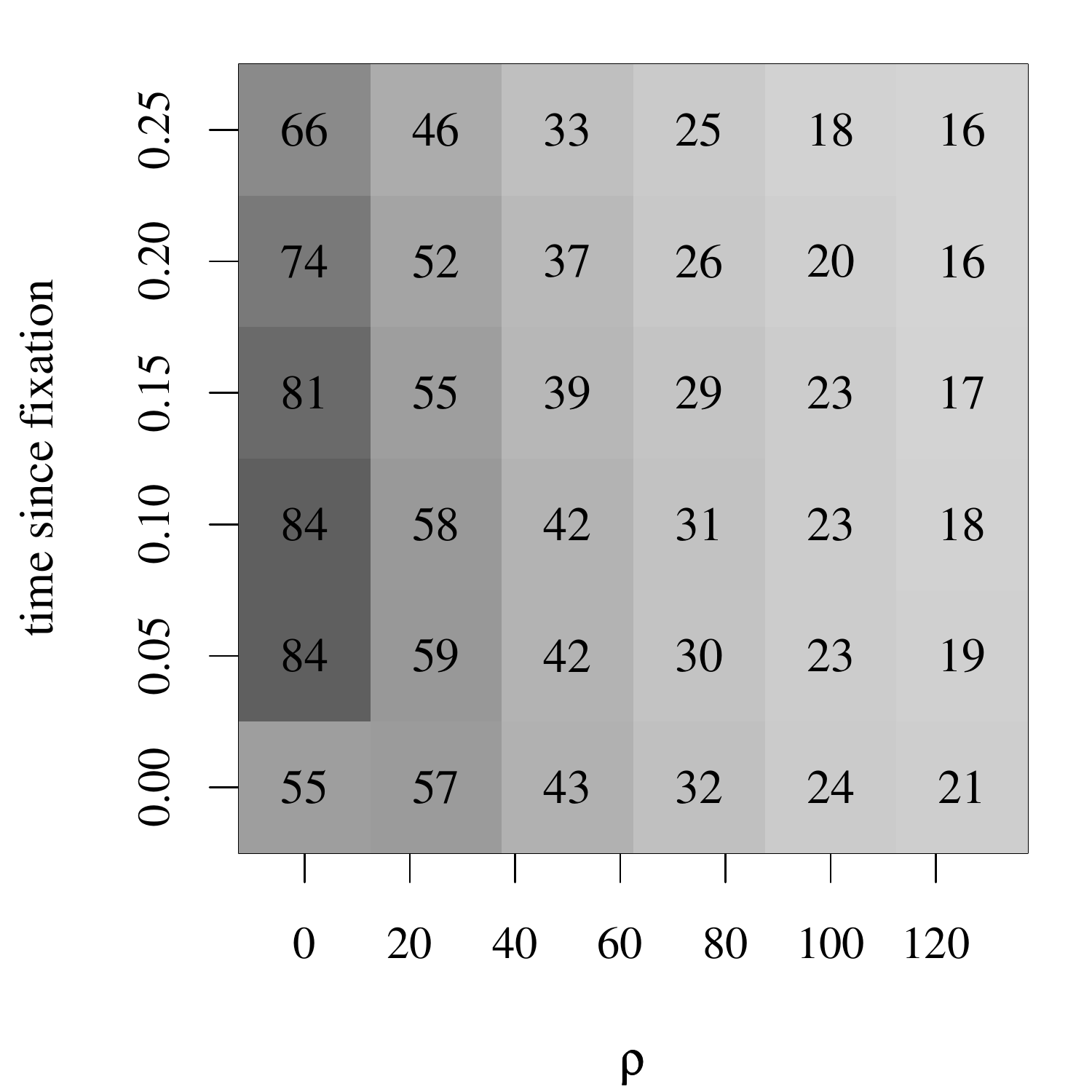}}
\caption{\label{fig8}%
The power of Tajima's $D$, depending on
    dominance, recombination distance and time since completion of the
    selective sweep. (A) co-dominant and (B) recessive, parameters as
    in Figure \ref{fig7}(B).
}
\end{figure}

\section{Proofs}
\label{S:proofs}
All our assertions are dealing with the case $\alpha\to\infty$. Hence,
all '$\approx$' in our proofs are to be read as
$\stackrel{\alpha\to\infty}\approx$. Frequently, we make use of a
sequence $(c_\alpha)_{\alpha>0}$ of real numbers with
$c_\alpha\to\infty$ slowly with $\alpha\to\infty$. For example, we
write
$$ \int_0^\alpha \frac{1-e^{-\xi}}{\xi}d\xi \approx \int_{c_\alpha}^\alpha 
\frac{1-e^{-\xi}}{\xi}d\xi \approx \int_{c_\alpha}^\alpha
\frac{1}{\xi}d\xi \approx \log\alpha.$$

Our proofs are based on classical one-dimensional diffusion theory;
see e.g.\ \cite{KarlinTaylor1981, Ewens2004}. The solutions of the SDE
as given in \eqref{eq:SDE0} have infinitesimal mean and variance
$$ \mu(x) = \alpha (h + x(1-2h))x(1-x), \qquad \sigma^2(x) = x(1-x).$$
The scale function is
\begin{align}\label{eq:scale}
  S(x) & = \int_0^x \exp\Big( - \int_0^\xi
  \frac{2\mu(\eta)}{\sigma^2(\eta)} d\eta\Big)d\xi
\nonumber \\
& = \int_0^x
  \exp\big( -2h\alpha \xi - \alpha \xi^2(1-2h)\big) d\xi.
\end{align}
Since $0$ and $1$ are exit boundaries for the diffusion, and for $T_0$
and $T_1$as defined in \eqref{eq:stop},
\begin{align}\label{eq:fix1}
  \mathbb P_x^{\alpha,h}[T_1<T_0] = \frac{S(x)}{S(1)} =
  \frac{\displaystyle\int_0^x e^{-2h\alpha
      \eta-\alpha\eta^2(1-2h)}d\eta}{\displaystyle\int_0^1
    e^{-2h\alpha \eta-\alpha\eta^2(1-2h)}d\eta}.
\end{align}
The Green function for $X^\ast$, the diffusion which is conditioned to
hit 1 and started in $x$, is given by
\begin{eqnarray}
  \label{eq:greenX}
    G^\ast(x,\xi) & =&
    \begin{cases} \frac{\displaystyle 2(S(1)-S(\xi))}{\displaystyle
        S(1)}\frac{\displaystyle
        S(\xi)}{\displaystyle \sigma^2(\xi)S'(\xi)}, & x\leq \xi, \\[4ex]
      \frac{\displaystyle 2(S(1) - S(x))S(\xi)}{\displaystyle
        S(1)}\frac{\displaystyle S(\xi)}{\displaystyle \sigma^2(\xi)
         S'(\xi)S(x)}, & x> \xi \end{cases}
 \nonumber\\[6ex]
& =&
    \begin{cases}
	2 \frac{\displaystyle\int_0^\xi e^{-2h\alpha
          \eta-\alpha\eta^2(1-2h)}d\eta\int_\xi^1
        e^{-2h\alpha\eta-\alpha\eta^2(1-2h)}d\eta}{\displaystyle
        \xi(1-\xi) e^{-2h\alpha\xi-\alpha\xi^2(1-2h)} \int_0^1
        e^{-2h\alpha\eta-\alpha\eta^2(1-2h)}d\eta},
	& x\leq \xi,\qquad
	\\[7ex]
      2 \frac{\displaystyle \int_x^1 e^{-2h\alpha
          \eta-\alpha\eta^2(1-2h)}d\eta }{\displaystyle
        \xi(1-\xi) e^{-2h\alpha\xi-\alpha\xi^2(1-2h)}\int_0^1
        e^{-2h\alpha\eta-\alpha\eta^2(1-2h)}d\eta} 
	\\
	\qquad \cdot\frac{\displaystyle \Big( \int_0^\xi
        e^{-2h\alpha\eta-\alpha\eta^2(1-2h)}d\eta\Big)^2}{\displaystyle
 	\int_0^x e^{-2h\alpha\eta-\alpha\eta^2(1-2h)}d\eta},
	& x> \xi.
	\end{cases}
\end{eqnarray}
Here, $G^\ast(x,\xi)d\xi$ is the average time the conditioned
diffusion spends in \linebreak  $(\xi,\xi+d\xi)$ before hitting 1. In particular,
we will use that
\begin{align}
  \label{eq:green2}
  \mathbb E_0[T^\ast] = \int_0^1 G^\ast(0,\xi)d\xi, \qquad \mathbb
  V_0[T^\ast] = 2\int_0^1 \int_0^\xi G^\ast(0,\xi) G^\ast(\xi,\eta)
  d\eta d\xi.
\end{align}
While the first identity is a classical result in
diffusion theory, the second identity can e.g.\ be read off from
\citet[equation 2.1.1]{DawsonGorostizaWakolbinger2001}.

\subsection{Proofs from Section \ref{S:results}}
\label{S:proofs1}
In this Section, we fix $h=0$. We will start with our key Lemma
\ref{lkey}, which states that a scaling of the diffusion $\mathcal X$
immediately gives our result that the sweep length is of order
$1/\sqrt\alpha$. Afterwards we provide proofs for Proposition
\ref{P1}, and Theorems \ref{T1} and \ref{T2}.

\subsection*{A key lemma}
In order to study the diffusion \eqref{eq:SDE} and the structured
coalescent, we use a rescaling argument. We need a definition first.

\begin{definition}\label{def:Y}
  The diffusion $\mathcal Y = (\mathcal Y_\tau)_{\tau\geq 0}$, which
  takes values in $[0, \infty]$, is the unique solution of the SDE
  $$ dY = Y^2 d\tau + \sqrt{Y}dW,$$
  for which $Y_\tau = \infty$ is a trap. We set
$$
T_{\mathcal Y} := \inf\{\tau\geq 0: Y_\tau = \infty\}.
$$
The process $\mathcal Y^\ast = (Y_\tau^\ast)_{\tau\geq 0}$ is given
  as $\mathcal Y$, conditioned on the event $\{T_{\mathcal
    Y}<\infty\}$. We write $T_{\mathcal Y}^\ast$ for $T_{\mathcal Y}$,
  also conditioned on $\{T_{\mathcal Y}<\infty\}$.
\end{definition}

\begin{remark}[The diffusion \boldmath $\mathcal Y^\ast$]%
\label{remY}
  The diffusion $\mathcal Y^\ast$ will play a crucial role in our
  proofs. Note that the scale function of this diffusion is
  \begin{align*}
    S^{\mathcal Y}(x) = \int_0^x e^{-\xi^2} d\xi.
  \end{align*}
  For the Green function of $\mathcal Y^\ast$, 
  \begin{align}\label{eq:greenY}
    G^{\mathcal Y, \ast}(x,\xi) = \begin{cases} 
      \displaystyle\frac{4}{\sqrt\pi}\frac{\displaystyle\int_\xi^\infty
        e^{-\eta^2}d\eta  \int_0^\xi e^{-\eta^2}
        d\eta}{\displaystyle\xi  e^{-\xi^2}}, & x\leq
      \xi,\\[5ex]
      \displaystyle\frac{4}{\sqrt\pi}\frac{\displaystyle\int_x^\infty
        e^{-\eta^2}d\eta  \Big(\int_0^\xi e^{-\eta^2}
        d\eta\Big)^2}{\displaystyle\xi  e^{-\xi^2} \int_0^x
        e^{-\eta^2} d\eta}, & x> \xi. \end{cases}
  \end{align}
\end{remark}

~

\begin{lemma}[Key lemma]\label{lkey}
  Let $T^\ast, \mathcal X^{\ast}$ be as in Definition \ref{def:X} and
  $T_{\mathcal Y}, T_{\mathcal Y}^\ast, \mathcal Y^\ast$ be as in
  Definition \ref{def:Y}, such that $\mathcal X^{\ast}$ and $\mathcal
  Y^\ast$ are started in $0$. Then, for $\widetilde{\mathcal X}^{\ast}
  := (\widetilde X_\tau^{\ast})_{\tau\geq 0}$, $ \widetilde
  X_\tau^{\ast}:= \sqrt\alpha X^{\ast}_{\tau/\sqrt\alpha}$,
  \begin{align}
    \label{eq:key1}
    \widetilde{\mathcal X}^{\ast} \Rightarrow \mathcal Y^\ast \;
    \text{ as $\alpha\to\infty$}.
  \end{align}
  Moreover, 
  \begin{align}
    \label{eq:key2}
    \mathbb E_0^{\alpha,h=0}[T^\ast] - \tfrac{1}{\sqrt\alpha}\mathbb
    E_0[T_{\mathcal Y}^\ast]
    &\stackrel{\alpha\to\infty}\approx \frac{3\log\alpha}{2\alpha},\\
    \label{eq:key3}
    \mathbb E_0[T_{\mathcal Y}^\ast] & \stackrel{\phantom{\alpha\to\infty}}{=}
    \frac{4}{\sqrt\pi}c_{\text{cat}},
  \end{align}
  where $c_{\text{cat}} \approx 0.916$ is Catalan's constant. In
  particular,
  \begin{align}
    \label{eq:key4}
    \sqrt\alpha \cdot T^{\ast} \Rightarrow T_{\mathcal Y}^\ast \qquad
    \text{ as } \alpha\to\infty.
  \end{align}
\end{lemma}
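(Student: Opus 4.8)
The plan is to establish the scaling limit \eqref{eq:key1} first, then derive \eqref{eq:key2} and \eqref{eq:key3} by explicit Green-function estimates, and finally obtain \eqref{eq:key4} as a consequence of the first two. For \eqref{eq:key1}, I would start from the SDE \eqref{eq:SDE} for $h=0$, namely $dX = \alpha X^2(1-X)\,dt + \sqrt{X(1-X)}\,dW$, and apply the deterministic time-and-space change $\widetilde X_\tau = \sqrt\alpha X_{\tau/\sqrt\alpha}$. A direct computation with the scaling properties of Brownian motion shows that $\widetilde X$ solves $d\widetilde X = \widetilde X^2(1-\widetilde X/\sqrt\alpha)\,d\tau + \sqrt{\widetilde X(1-\widetilde X/\sqrt\alpha)}\,d\widetilde W$ for a standard Brownian motion $\widetilde W$. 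As $\alpha\to\infty$ the drift and diffusion coefficients converge (uniformly on compact $\widetilde X$-sets) to $\widetilde X^2$ and $\sqrt{\widetilde X}$, which are exactly the coefficients of $\mathcal Y$ in Definition \ref{def:Y}. I would then invoke a standard diffusion-approximation / martingale-problem stability result (e.g.\ from \citet{KarlinTaylor1981}) to conclude $\widetilde{\mathcal X}\Rightarrow\mathcal Y$; the conditioning on fixation is handled by noting that conditioning a one-dimensional diffusion on its exit behaviour is encoded through the scale function, and $S(x)/S(1)$ for $\mathcal X$ converges to $S^{\mathcal Y}(x)$ (up to normalization), so the $h$-transformed (conditioned) generators converge as well, giving $\widetilde{\mathcal X}^\ast\Rightarrow\mathcal Y^\ast$.

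For \eqref{eq:key2}, I would write $\mathbb E_0^{\alpha,h=0}[T^\ast] = \int_0^1 G^\ast(0,\xi)\,d\xi$ using \eqref{eq:green2} and the explicit form of $G^\ast$ in \eqref{eq:greenX} specialized to $h=0$. The substitution $\xi = y/\sqrt\alpha$ turns this into an integral over $y\in[0,\sqrt\alpha]$; on the region $y = O(1)$ the integrand converges (after the obvious $1/\sqrt\alpha$ prefactor) to $G^{\mathcal Y,\ast}(0,y)$ from \eqref{eq:greenY}, contributing $\tfrac{1}{\sqrt\alpha}\mathbb E_0[T_{\mathcal Y}^\ast]$. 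The correction term $\tfrac{3\log\alpha}{2\alpha}$ must come from the regime where $\xi$ is bounded away from $0$ — roughly $\xi\in(\varepsilon,1)$ — where the Gaussian factors $e^{-\alpha\xi^2}$ are negligible and the Green function behaves like $\tfrac{2}{\alpha}\cdot\tfrac{1}{\xi(1-\xi)}\cdot(\text{something of order }1)$; integrating this gives a $\log\alpha/\alpha$ contribution. I would separate $[0,1]$ at a slowly growing cutoff $c_\alpha/\sqrt\alpha$ (in the spirit of the $(c_\alpha)$ device introduced at the start of Section \ref{S:proofs}), show the low-frequency part yields $\tfrac{1}{\sqrt\alpha}\mathbb E_0[T_{\mathcal Y}^\ast] + o(1/\alpha)$ via dominated convergence, and compute the intermediate/high part to leading order $\tfrac{3\log\alpha}{2\alpha}$ by Laplace-type asymptotics; summing the phase contributions should also reproduce Table \ref{tab:1}. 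For \eqref{eq:key3}, it suffices to compute $\mathbb E_0[T_{\mathcal Y}^\ast] = \int_0^\infty G^{\mathcal Y,\ast}(0,\xi)\,d\xi = \tfrac{4}{\sqrt\pi}\int_0^\infty \tfrac{1}{\xi}e^{\xi^2}\int_\xi^\infty e^{-\eta^2}d\eta\int_0^\xi e^{-\eta^2}d\eta\,d\xi$; a change of variables and an integral identity for the complementary error function identify this with $\tfrac{4}{\sqrt\pi}$ times Catalan's constant. Finally, \eqref{eq:key4} follows: \eqref{eq:key1} already gives $\sqrt\alpha\,T^\ast \Rightarrow T_{\mathcal Y}^\ast$ once one checks that the fixation time is a continuous functional of the path in the relevant topology (the exit time at level $1$, i.e.\ $\infty$ for $\mathcal Y$), and \eqref{eq:key2}–\eqref{eq:key3} pin down the first moment.

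The main obstacle I anticipate is twofold and both parts concern the boundary behaviour. First, passing the conditioning through the limit in \eqref{eq:key1}: $\mathcal Y^\ast$ has $\infty$ as an entrance/absorbing point reached in finite time, and one must verify that $\sqrt\alpha X^\ast_{\cdot/\sqrt\alpha}$ does not lose mass "at infinity" — i.e.\ tightness near the upper boundary — which requires care because the state space $[0,\infty]$ is non-compact and the map "exit time" is not continuous in the uniform-on-compacts topology without a uniform-integrability / non-explosion-type control. Second, and more quantitatively, extracting the exact constant $3/2$ in front of $\log\alpha/\alpha$ in \eqref{eq:key2}: this is a cancellation between the low-frequency phase (whose subleading term contributes $\tfrac{\log\alpha}{2\alpha}$, cf.\ Table \ref{tab:1}) and the high-frequency phase near $\xi = 1$ (which contributes another $\tfrac{\log\alpha}{\alpha}$), and getting the coefficient right demands a genuinely careful two-sided expansion of $G^\ast(0,\xi)$ rather than a crude bound. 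Everything else — the SDE computation, the convergence of coefficients, the error-function integral for Catalan's constant — is routine.
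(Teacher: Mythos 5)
Your plan coincides with the paper's proof in all essentials: the It\^o/time-change computation turning the SDE for $h=0$ into $d\widetilde X=\widetilde X^2(1-\widetilde X/\sqrt\alpha)\,d\tau+\sqrt{\widetilde X(1-\widetilde X/\sqrt\alpha)}\,dW$ and passing to the limit for \eqref{eq:key1}, the Green-function integral \eqref{eq:greenY} evaluated to $\tfrac{4}{\sqrt\pi}c_{\text{cat}}$ for \eqref{eq:key3}, and for \eqref{eq:key2} the substitution $\xi\mapsto\xi/\sqrt\alpha$ together with the splitting $\tfrac{1}{\xi(1-\xi/\sqrt\alpha)}=\tfrac1\xi+\tfrac{1}{\sqrt\alpha-\xi}$ and a slowly diverging cutoff $c_\alpha$, which yields the two logarithmic contributions $\tfrac{\log\alpha}{\alpha}$ and $\tfrac{\log\alpha}{2\alpha}$ that sum (they add, they do not cancel) to the coefficient $3/2$, exactly as you anticipate from Table \ref{tab:1}. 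The boundary/tightness issues you flag for the conditioned convergence and for \eqref{eq:key4} are in fact treated even more lightly in the paper than in your outline, so your plan is, if anything, the more careful of the two.
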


\begin{proof}
  For \eqref{eq:key1}, we start by a change of variables
  $$ \widetilde X_t = \sqrt{\alpha} X_{t/\sqrt{\alpha}}.$$
  By changing the time scale to $d\tau = \sqrt\alpha dt$, we obtain by
  It\^o's formula that
  \begin{align*}
    d\widetilde X & = \sqrt{\alpha} dX = \sqrt\alpha \widetilde X^2(1
    - \widetilde X/\sqrt\alpha)dt + \sqrt{\alpha \tfrac{\widetilde
        X}{\sqrt\alpha}(1-\widetilde X/\sqrt\alpha)}dW \\ & =
    \widetilde X^2(1-\widetilde X/\sqrt\alpha)d\tau + \sqrt{\widetilde
      X(1-\widetilde X/\sqrt\alpha)}dW.
  \end{align*}
  Since $\alpha\to\infty$, we see that
  $$\widetilde{\mathcal X} \Rightarrow \mathcal Y \text{ as $\alpha\to\infty$}$$
  which also implies \eqref{eq:key1}.

  For \eqref{eq:key3}, with a little help from {\sc Mathematica} to
  evaluate the last integral,
  \begin{align}\label{eq:S1}
    \mathbb E_0[T_{\mathcal Y}^\ast] = \int_0^\infty G^{\mathcal Y,
      \ast}(0,\xi) d\xi = \frac{4}{\sqrt\pi} \int_0^\infty
    \frac{\int_\xi^\infty e^{-\eta^2}d\eta \int_0^\xi
      e^{-\eta^2}d\eta}{\xi e^{-\xi^2}} d\xi = \frac{4
      c_{cat}}{\sqrt{\pi}}.
  \end{align}
For \eqref{eq:key2}, recall the Green function for the diffusion
  $\mathcal X^\ast$ from \eqref{eq:greenX}. 
  We use 
\begin{align*}
 \int_\xi^\infty e^{-\eta^2}
&d\eta \stackrel{\xi\to\infty}\approx 
  \frac{e^{-\xi^2}}{2\xi}, \qquad
\int_0^\infty e^{-\eta^2}d\eta =
  \frac{\sqrt\pi}{2},
\\
& \frac{1}{\xi(1-\xi/\sqrt{\alpha})} =
  \frac{1}{\xi} + \frac{1}{\sqrt\alpha - \xi}
\end{align*}
and \eqref{eq:S1} to obtain
  \begin{align*}
    \mathbb E_0^{\alpha, h=0}[T^\ast] & = \int_0^1 G^{\mathcal X,
      \ast}(0,\xi)d\xi \approx \frac{4}{\sqrt{\pi\alpha}}
    \int_0^{\sqrt{\alpha}} \frac{ \int_\xi^{\sqrt{\alpha}} e^{-\eta^2}
      d\eta \int_0^\xi e^{-\eta^2}
      d\eta}{\xi(1-\xi/\sqrt\alpha)e^{-\xi^2}}d\xi \\ & \approx
    \tfrac{1}{\sqrt\alpha}\mathbb E_0[T_{\mathcal Y}^\ast] +
    \tfrac{4}{\sqrt{\pi\alpha}}\big( \Delta_1 + \Delta_2 + \Delta_3 +
    \Delta_4\big)
  \end{align*}
  with (recall the sequence $c_\alpha$ going to infinity slowly
  enough)
  \begin{align*}
    \Delta_1 & := - \int_{\sqrt\alpha}^\infty \frac{\int_\xi^\infty
      e^{-\eta^2} d\eta \int_0^\xi e^{-\eta^2} d\eta}{\xi
      e^{-\xi^2}}d\xi\approx - \tfrac{\sqrt\pi}{4}
    \int_{\sqrt\alpha}^\infty \frac{1}{\xi^2} d\xi =
    - \tfrac{\sqrt\pi}{4\sqrt\alpha},
\\
    \Delta_2 & := - \int_0^{\sqrt\alpha}
    \frac{\int_{\sqrt\alpha}^\infty e^{-\eta^2} d\eta \int_0^\xi
      e^{-\eta^2} d\eta}{\xi e^{-\xi^2}}d\xi \approx
    -\tfrac{1}{2\sqrt\alpha} \int_{0}^{\sqrt\alpha} \frac{e^{\xi^2 -
        \alpha} \int_0^\xi e^{-\eta^2} d\eta}{\xi} d\xi
\\
& {}\kern.8ex\approx
    -\tfrac{1}{2\sqrt\alpha} \int_{\sqrt\alpha /2}^{\sqrt\alpha}
    \frac{e^{\xi^2 - \alpha} \int_0^\xi e^{-\eta^2} d\eta}{\xi} d\xi
    \approx -\tfrac{\sqrt\pi}{4\sqrt{\alpha}} e^{-\alpha}
    \int_{\sqrt\alpha/2}^{\sqrt\alpha} \frac{e^{\xi^2}}{\xi} d\xi
    \approx -\frac{\sqrt\pi}{8\alpha^{3/2}},
\\
    \Delta_3 & := \int_0^{\sqrt\alpha-c_\alpha}
    \frac{\int_\xi^{\sqrt\alpha} e^{-\eta^2} d\eta \int_0^\xi
      e^{-\eta^2} d\eta}{(\sqrt\alpha - \xi) e^{-\xi^2}} d\xi \approx
    \int_{c_\alpha}^{\sqrt\alpha-c_\alpha} \frac{\int_\xi^{\infty}
      e^{-\eta^2}d\eta\int_0^\xi e^{-\eta^2}d\eta}{(\sqrt{\alpha} -
      \xi)e^{-\xi^2}} d\xi
\\
& {}\kern.8ex\approx \tfrac{\sqrt\pi}{4}
    \int_{c_\alpha}^{\sqrt\alpha-c_\alpha} \frac{1}{\xi(\sqrt\alpha -
      \xi)} d\xi = \tfrac{\sqrt\pi}{4\sqrt\alpha}
    \int_{c_\alpha}^{\sqrt\alpha-c_\alpha} \frac 1\xi +
    \frac{1}{\sqrt\alpha - \xi}d\xi
\\
& {}\kern.8ex\approx
    \tfrac{\sqrt\pi}{2\sqrt\alpha}\log(\sqrt\alpha) =
    \tfrac{\sqrt\pi}{4\sqrt\alpha}\log\alpha,\\
    \Delta_4 &:= \int_{\sqrt\alpha-c_\alpha}^{\sqrt\alpha}
    \frac{\int_\xi^{\sqrt\alpha} e^{-\eta^2} d\eta \int_0^\xi
      e^{-\eta^2} d\eta}{(\sqrt\alpha - \xi) e^{-\xi^2}} d\xi \approx
    \tfrac{\sqrt\pi}{2} \int_{\sqrt\alpha-c_\alpha}^{\sqrt\alpha}
    \frac{\int_\xi^{\sqrt\alpha}
      e^{-(\eta-\xi)(\eta+\xi)}d\eta}{\sqrt\alpha - \xi}d\xi
\\
&{}\kern.8ex \approx \tfrac{\sqrt\pi}{2}
    \int_{\sqrt\alpha-c_\alpha}^{\sqrt\alpha}
    \frac{\int_\xi^{\sqrt\alpha} e^{-(\eta-\xi)2\sqrt\alpha}
      d\eta}{\sqrt\alpha - \xi}d\xi = \tfrac{\sqrt\pi}{4 \sqrt\alpha}
    \int_{\sqrt\alpha-c_\alpha}^{\sqrt\alpha} \frac{1-e^{-(\sqrt\alpha
        - \xi)2\sqrt\alpha}}{\sqrt\alpha - \xi} d\xi
\\
& {}\kern.8ex =
    \tfrac{\sqrt\pi}{4\sqrt\alpha} \int_0^{c_\alpha} \frac{1-e^{-\xi
        2\sqrt\alpha}}{\xi} d\xi = \tfrac{\sqrt\pi}{4\sqrt\alpha}
    \int_0^{\sqrt\alpha c_\alpha} \frac{1-e^{-2\xi}}{\xi} d\xi \approx
    \tfrac{\sqrt\pi}{8\sqrt\alpha}\log\alpha.
  \end{align*}
  The result \eqref{eq:key2} follows.
\end{proof}

\noindent%
\textbf{Proof of Proposition \ref{P1}}\\
Due to Lemma \ref{lkey}, we can compute, by classical diffusion
theory,
\begin{align*}
  \mathbb P_{\varepsilon_\alpha}^{\alpha, h}[T_1<T_0] & \approx
  \mathbb P_{\sqrt\alpha\cdot \varepsilon_\alpha}[T_{\mathcal Y}
  <\infty] = \frac{S^{\mathcal Y}(\sqrt\alpha\cdot
    \varepsilon_\alpha)}
  {S^{\mathcal Y}(\infty)} \\
  & = \frac{\int_0^{\sqrt\alpha \cdot \varepsilon_\alpha} e^{-\eta^2}
    d\eta}{\int_0^\infty e^{-\eta^2} d\eta} \approx \sqrt\alpha \cdot
  \varepsilon_\alpha \cdot\frac{2}{\sqrt\pi}
\end{align*}
and we are done.

\bigskip
\noindent%
\textbf{Proof of Theorem \ref{T1}}\\
Clearly, \eqref{eq:T11a} is a combination of \eqref{eq:key2} and
\eqref{eq:key3}. For \eqref{eq:T12b} we use
\eqref{eq:green2} and write 
\begin{eqnarray}
  \label{eq:var}
    \alpha \mathbb V_0^{\alpha, h=0}[T^\ast] & =& 8 \alpha \int_0^1
    \int_0^\xi\frac{\Big(\int_\xi^1 e^{-\alpha \zeta^2}d\zeta\Big)^2
      \Big(\int_0^\eta e^{-\alpha \zeta^2}d\zeta\Big)^2}{\Big(\int_0^1
      e^{-\alpha \zeta^2}d\zeta\Big)^2\xi(1-\xi)\eta(1-\eta)
      e^{-\alpha\xi^2} e^{-\alpha\eta^2}} d\eta d\xi \enskip
\nonumber \\
& \approx&
    8\int_0^\infty \int_0^\xi\frac{\Big(\int_\xi^\infty
      e^{-\zeta^2}d\zeta\Big)^2 \Big(\int_0^\eta
      e^{-\zeta^2}d\zeta\Big)^2}{\Big(\int_0^\infty
      e^{-\zeta^2}d\zeta\Big)^2\xi\eta e^{-\xi^2} e^{-\eta^2} } d\eta
    d\xi < \infty.
\end{eqnarray}

\bigskip
\noindent%
\textbf{Proof of Theorem \ref{T2}}\\
We use the same notation as in Lemma \ref{lkey} (in particular
$\sqrt\alpha dt = d\tau$) and transform the structured coalescent to a
structured coalescent conditioned on the process $\mathcal Y$. For
\eqref{T22}, note that the coalescence rate at time $t$ is given by
$$\frac 1{X_t} dt = \frac{\sqrt\alpha}{\sqrt\alpha \widetilde X_\tau} d\tau \Rightarrow 
\frac{1}{Y_\tau} d\tau \text{ as $\alpha\to\infty$.}$$ Recombinations
occur at rate
\begin{align}
  \label{eq:rec10}
  \rho(1-X_t) dt = \frac{\rho}{\sqrt\alpha}\Big( 1 - \frac{\widetilde
    X_\tau}{\sqrt\alpha}\Big)d\tau \approx \lambda.
\end{align}
For back recombination from the wild-type to the beneficial
background,
$$\rho X_t dt =  \frac{\rho}{\sqrt\alpha} \frac{\widetilde X_\tau}{\sqrt\alpha}d\tau
\approx 0.$$ All three limits are valid for all $0\leq \tau <
T_{\mathcal Y}^\ast$.

So, for the structured coalescent conditioned on $\mathcal Y$, we find
that the coalescence rate is $1/Y_\tau d\tau$, and the recombination
rate is $\lambda$. During times when $Y_\tau$ is of order 1, we see
that both, coalescence and recombination happen at rate of order 1. In
particular, we have shown \eqref{T22}. For \eqref{T21}, since back
recombination can be ignored, by \eqref{eq:rec10},
\begin{align*}
  \mathbb P_0^{\alpha, h=0}[\mathcal K_{T^\ast}=(1,0)|\mathcal K_0=(1,0)]
  & = \mathbb E^{\alpha, h=0}_0[e^{-\lambda (\sqrt\alpha \cdot
    T^\ast)}].
\end{align*}

\subsection{Proofs from Section \ref{S:results2}}
\label{S:proofs2}
We directly use \eqref{eq:fix1} in order to prove Proposition
\ref{P2}, and \eqref{eq:green2} in order to prove Theorem \ref{T3}.
We will use several approximation results on integrals appearing in
these equations. We will use
\begin{equation}\label{eq:app2}
  \begin{aligned}
    \int_0^\alpha e^{\xi^2} d\xi - \frac{e^{\alpha^2}}{2\alpha} & =
    e^{\alpha^2} \Big(\int_0^\alpha e^{-2\xi\alpha}(e^{\xi^2}-1)d\xi -
    \int_\alpha^\infty e^{-2\xi\alpha}d\xi\Big) \\ & \approx
    e^{\alpha^2}\int_0^\alpha \xi^2 e^{-2\alpha \xi} d\xi \approx
    \frac{e^{\alpha^2}}{4\alpha^3},
  \end{aligned}
\end{equation}
as well as
\begin{eqnarray}\label{eq:app3}
  \lefteqn{
    \int_0^\xi e^{-2h\eta + (2h-1)\eta^2/\alpha} d\eta - \frac 1{2h}(1-e^{-2h\xi})}
\quad \nonumber \\
& =& \int_0^\xi
    e^{-2h\eta}(e^{(2h-1)\eta^2/\alpha} -1)d\eta
\nonumber\\
& \approx&
    \frac{2h-1}{\alpha}\int_0^\xi e^{-2h\eta}\eta^2 d\eta \approx
    \frac{c(1-e^{-2h\xi}) + de^{-2h\xi}}{\alpha}
\end{eqnarray}
for $\xi\leq\alpha$ and fixed $0<h\leq 1$, for some $0\leq
c$, $d<\infty$. The last equation implies for $\xi=\alpha$
\begin{equation}\label{eq:app1}
  \begin{aligned}
    \int_0^\alpha e^{-2h\eta + (2h-1)\eta^2/\alpha} d\eta - \frac
    1{2h} & \approx\frac{2h-1}{4h^3\alpha}.
  \end{aligned}
\end{equation}

\bigskip\noindent%
\textbf{Proof of Proposition \ref{P2}}\\
For $0<h\leq 1$, we write, using \eqref{eq:fix1} and \eqref{eq:app1},
\begin{align*}
  \mathbb P^{\alpha, h}_{\varepsilon_\alpha}[T_1<T_0] & =
  \frac{\displaystyle \int_0^{\alpha \varepsilon_\alpha} e^{-2h \eta +
      (2h-1)\eta^2/\alpha}d\eta}{\displaystyle \int_0^{\alpha} e^{-2h
      \eta + (2h-1)\eta^2/\alpha}d\eta} \approx
  2h\alpha\varepsilon_\alpha.
\end{align*}

\bigskip\noindent%
\textbf{Proof of Theorem \ref{T3}}\\
1. For $0<h<1$, we write with \eqref{eq:app1}
\begin{eqnarray}
  \label{eq:T3p1}
    \mathbb E_0^{\alpha, h}[T^\ast] & =&  2\int_0^\alpha
    \frac{\displaystyle\int_0^\xi e^{-2h
        \eta+(2h-1)\eta^2/\alpha}d\eta\int_\xi^\alpha
      e^{-2h\eta+(2h-1)\eta^2/\alpha}d\eta}{\displaystyle
      \xi(\alpha-\xi) e^{-2h\xi+(2h-1)\xi^2/\alpha} \int_0^\alpha
      e^{-2h\eta+(2h-1)\eta^2/\alpha}d\eta}d\xi
\nonumber \\
&\approx&
    \frac{4h}{\alpha}\big( A_1 + A_2\big)
\end{eqnarray}
with
\begin{align*}
  A_1 & := \int_0^\alpha \frac{\displaystyle\int_0^\xi e^{-2h
      \eta+(2h-1)\eta^2/\alpha}d\eta\int_\xi^\alpha
    e^{-2h\eta+(2h-1)\eta^2/\alpha}d\eta}{\displaystyle \xi
    e^{-2h\xi+(2h-1)\xi^2/\alpha} },\\ A_2 & := \int_0^\alpha
  \frac{\displaystyle\int_0^\xi e^{-2h
      \eta+(2h-1)\eta^2/\alpha}d\eta\int_\xi^\alpha
    e^{-2h\eta+(2h-1)\eta^2/\alpha}d\eta}{\displaystyle (\alpha-\xi)
    e^{-2h\xi+(2h-1)\xi^2/\alpha}}.
\end{align*}
We use
\begin{align*}
  \int_0^\xi \frac{1-e^{-\eta}}{\eta}d\eta - \log\xi \approx \gamma
\end{align*}
for $\xi\to\infty$ and, if $0\leq \xi\leq\alpha$,
\begin{eqnarray}
  \label{eq:app4}
 \lefteqn{\frac{\int_\xi^\alpha e^{-2h\eta +
        (2h-1)\eta^2/\alpha}d\eta}{e^{-2h\xi + (2h-1)\xi^2/\alpha}}
     -\frac{1 - e^{-(2h - (2h-1)2\xi/\alpha)(\alpha-\xi)}}{2h -
      2(2h-1)\xi/\alpha} }
\quad \nonumber \\
& =& \int_0^{\alpha-\xi} e^{-2h\eta +
      (2h-1)(2\eta\xi+\eta^2)/\alpha}d\eta -
    \int_0^{\alpha-\xi} e^{-(2h - (2h-1)2\xi/\alpha)\eta}d\eta
\nonumber\\
    & =& \int_0^{\alpha-\xi}e^{-(2h-2(2h-1)\xi/\alpha)\eta} \big(
    e^{(2h-1)\eta^2/\alpha}-1\big)d\eta
\nonumber\\
& \approx&
    \frac{2h-1}{\alpha}\int_0^{\alpha-\xi}e^{-(2h-2(2h-1)\xi/\alpha)\eta}
    \eta^2d\eta
\nonumber\\
& =&
    \frac{c(1-e^{-(2h-2(2h-1)\xi/\alpha)(\alpha-\xi)}) + d
      e^{-(2h-2(2h-1)\xi/\alpha)(\alpha-\xi)}}{\alpha}
\end{eqnarray}
for some $c,d\geq 0$, not depending on $\alpha$, in order to obtain
\begin{align}
  4hA_1 \notag & - \frac{1}{h}(\log(2h\alpha)+\gamma)
\\ & \approx
  4h\int_0^\alpha \frac{\int_0^\xi e^{-2h\eta +
      (2h-1)\eta^2/\alpha}d\eta \int_\xi^\alpha e^{-2h\eta +
      (2h-1)\eta^2/\alpha}d\eta}{\xi
    e^{-2h\xi + (2h-1)\xi^2/\alpha}}d\xi
\notag \\
& \qquad {} - \frac{1}{h}\int_0^{2h \alpha}\frac{1-e^{-\xi}}{\xi} d\xi\notag \\
  & \label{eq:app6}\approx \frac{1}{h} \Big( \int_0^\alpha
  \frac{1-e^{-2h\xi}}{\xi} \frac{2h}{2h - 2(2h-1)\xi/\alpha} d\xi -
  \int_0^\alpha \frac{1-e^{-2h\xi}}{\xi}d\xi\Big) \\ & =
  \frac{1}{h}\int_0^\alpha (1-e^{-2h\xi}) \frac{2(2h-1)/\alpha}{2h -
    2(2h-1)\xi/\alpha} d\xi \notag \\ & \approx \frac{1}{h} \log\big(
  2h -
  2(2h-1)\xi/\alpha\big)\Big|_{\xi=\alpha}^{\xi=0} \notag \\
  & = \frac{1}{h}\big( \log(2h) - \log(2(1-h))\big).\notag
\end{align}
Here, we have used \eqref{eq:app3} and \eqref{eq:app4} for the second
$\approx$. In addition,
\begin{eqnarray*}
\lefteqn{4h A_2 - \frac{1}{1-h}(\log(2(1-h)\alpha)+\gamma)}\quad
\\
& \approx&  4h\int_0^\alpha \frac{\int_0^\xi e^{-2h\eta +
      (2h-1)\eta^2/\alpha}d\eta \int_\xi^\alpha e^{-2h\eta +
      (2h-1)\eta^2/\alpha}d\eta}{(\alpha-\xi) e^{-2h\xi +
      (2h-1)\xi^2/\alpha}}d\xi
\\
&& \qquad {} - \frac{1}{1-h}
  \int_0^{2(1-h)\alpha}\frac{1-e^{-\xi}}{\xi}d\xi
\\
& \approx&
  \frac{1}{1-h} \Big(\int_0^\alpha \frac{1 - e^{-(2h -
      (2h-1)2\xi/\alpha)(\alpha-\xi)}}{\alpha-\xi}\frac{2(1-h)}{2h -
    2(2h-1)\xi/\alpha} d\xi 
\\
&& \qquad\qquad {} - \int_0^\alpha
  \frac{1-e^{-2(1-h)\xi}}{\xi}d\xi\Big)
\\
& \stackrel{\xi\to\alpha-\xi}\approx& \frac{1}{1-h} \Big( \int_0^\alpha
  \frac{1-e^{-2(1-h) \xi}}{\xi} \frac{2(1-h)}{2(1-h) -
    2(2(1-h)-1)\xi/\alpha}d\xi 
\\
&& \qquad\qquad {} -
  \int_0^{\alpha}\frac{1-e^{-2(1-h)\xi}}{\xi}d\xi \Big)
\end{eqnarray*}
The last line is exactly line \eqref{eq:app6} with $h$ replaced by
$1-h$, and hence we obtain
\begin{align*}
  \mathbb E_0^{\alpha, h}[T^\ast] - \frac{\log(2\alpha)}{\alpha
    h(1-h)}& \approx \tfrac{1}{\alpha}\big(\tfrac{1}{h}( \gamma +
  2\log h - \log(1-h))\\ & \qquad \quad  +
  \tfrac{1}{1-h}( \gamma + 2\log(1-h) - \log h)\big) \\ & =
  \frac{1}{\alpha h(1-h)} \Big( \gamma + (2-3h)\log h
  + (3h-1)\log(1-h)\Big).
\end{align*}

The variance is computed by \eqref{eq:green2} and
\begin{align*}
 \mathbb V_0^{\alpha, h}[T^\ast]  &= 8 \int_0^1 \int_0^\xi
  \frac{\Big(\int_\xi^1 e^{-2h\alpha\zeta + (2h-1)\alpha
      \zeta^2}d\zeta\Big)^2 }{\xi(1-\xi)\eta(1-\eta)
    e^{-2h\alpha\xi + (2h-1)\alpha \xi^2}e^{-2h\alpha\eta +
      (2h-1)\alpha\eta^2} }  
\\
&  \qquad \qquad \cdot \frac{\Big(\int_0^\eta e^{-2h\alpha\zeta +
      (2h-1)\alpha\zeta^2}d\zeta\Big)^2}{\Big(\int_0^1 e^{-2h\alpha\zeta +
      (2h-1)\alpha\zeta^2}d\zeta\Big)^2}d\eta d\xi
\\
& = 8
  \int_0^\alpha \int_0^\xi
\frac{\Big(\int_\xi^\alpha e^{-2h\zeta +
      (2h-1) \zeta^2/\alpha} d\zeta\Big)^2}{\xi(\alpha-\xi)\eta(\alpha-\eta)
    e^{-2h\xi + (2h-1) \xi^2/\alpha}e^{-2h\eta +
      (2h-1)\eta^2/\alpha}} 
\\
& \qquad \qquad \cdot \frac{ \Big(\int_0^\eta
    e^{-2h\zeta +
      (2h-1)\zeta^2/\alpha}d\zeta\Big)^2}{ \Big(\int_0^\alpha e^{-2h\zeta +
      (2h-1)\zeta^2/\alpha}d\zeta\Big)^2}d\eta d\xi
\\
& \approx
  \frac{8}{\alpha^2} \Big( \int_0^{c_\alpha} \int_0^\xi
  \frac{e^{-4h\xi} \Big(\int_0^\eta
    e^{-2h\zeta}d\zeta\Big)^2}{\xi \eta e^{-2h\xi}e^{-2h\eta}}d\eta d\xi
\\*
& \qquad \qquad  +
  \int_{\alpha-c_\alpha}^\alpha \int_{\alpha-c_\alpha}^\xi
  \frac{1}{(\alpha-\xi)(\alpha-\eta)
    e^{-2h\xi + (2h-1) \xi^2/\alpha}}
\\
& \qquad \qquad \qquad\qquad   \cdot
    \frac{\Big(\int_\xi^\alpha e^{-2h\zeta + (2h-1)
      \zeta^2/\alpha}d\zeta\Big)^2}{e^{-2h\eta +
      (2h-1)\eta^2/\alpha}}d\eta d\xi\Big)
\\
& \approx
  \frac{8}{\alpha^2}\Big( \int_0^{c_\alpha} \int_0^\xi
  \frac{e^{-2h(\xi-\eta)} \Big(\int_0^\eta
    e^{-2h\zeta}d\zeta\Big)^2}{\xi \eta}d\eta d\xi
\\
& \qquad \qquad  + \int_0^{c_\alpha}
  \int_\xi^{c_\alpha} \frac{\Big(\int_0^\xi
    e^{2(1-h)\zeta}d\zeta\Big)^2}{\xi\eta
    e^{2(1-h)\xi}e^{2(1-h)\eta}}d\eta d\xi\Big)
\\
& \approx
  \frac{2}{\alpha^2}\Big( \frac 1{h^2}\int_0^{\infty} \int_0^\xi
  \frac{e^{-(\xi-\eta)} \Big(\int_0^\eta
    e^{-\zeta}d\zeta\Big)^2}{\xi \eta}d\eta d\xi
\\ & \qquad \qquad + \frac{1}{(1-h)^2} \int_0^{\infty}
  \int_\xi^{\infty} \frac{\Big(\int_0^\xi
    e^{\zeta}d\zeta\Big)^2}{\xi\eta e^{\xi}e^{\eta}}d\eta d\xi\Big)
\\
  & \approx \frac{1}{\alpha^2}\Big( \frac{c'}{h^2} +
  \frac{c''}{(1-h)^2}\Big)
\end{align*}
for some finite $c', c''$.

\noindent 2. Now, we come to the case $h=1$. Here, we use the scale
function (which is up to a factor of $e^{-\alpha}$ the same as in
\eqref{eq:scale} for $h=1$)
$$ S(x) = \int_0^x \exp\Big( - 2\alpha\int_1^\xi(1-\eta)d\eta\Big)d\xi = \int_{1-x}^{1}
e^{\alpha\xi^2}d\xi.$$ Then,
\begin{align*}
  \mathbb E_0^{\alpha, h=1}[T^\ast] & =2\int_0^1 \frac{\int_0^\xi e^{
      \alpha \eta^2}d\eta \int_\xi^1e^{ \alpha
      \eta^2}d\eta}{\xi(1-\xi) e^{ \alpha \xi^2} 
    \int_0^1e^{ \alpha \eta^2} d\eta} d\xi\\ & = \frac{2}{\sqrt\alpha}
  \int_0^{\sqrt\alpha} \frac{\int_0^\xi e^{\eta^2}d\eta
    \int_\xi^{\sqrt\alpha}e^{\eta^2}d\eta}{\xi(1-\xi/\sqrt\alpha)
    e^{\xi^2}  \int_0^{\sqrt\alpha}e^{\eta^2} d\eta} d\xi \\ & =
  \frac{2}{\sqrt\alpha}(A + \Delta_1 + \Delta_2+ \Delta_3 + \Delta_4),
\end{align*}
such that (recall \eqref{eq:app2}), and using $\int_\xi^{\sqrt\alpha}
e^{\eta^2} d\eta = \int_0^{\sqrt\alpha} e^{\eta^2} d\eta - \int_0^{\xi}
e^{\eta^2} d\eta$ and {\sc Mathematica} for the first term
\begin{align*}
  A & = \int_0^\infty \frac{\int_0^\xi e^{\eta^2}d\eta}{\xi
    e^{\xi^2}}d\xi = \frac{\pi^{3/2}}{4},\\
  \Delta_1 & = - \int_{\sqrt\alpha}^\infty \frac{\int_0^\xi
    e^{\eta^2}d\eta}{\xi e^{\xi^2}}d\xi \approx -
  \int_{\sqrt\alpha}^\infty \frac{1}{2\xi^2} d\xi = -
  \frac{1}{2\sqrt\alpha}, \\\Delta_2 & = -\int_0^{\sqrt\alpha}
  \frac{\int_0^{\xi} e^{\eta^2} d\eta  \int_0^{\xi} e^{\eta^2}
    d\eta}{\xi e^{\xi^2} \int_0^{\sqrt\alpha}e^{\eta^2} d\eta}d\xi
  \approx - \frac{2\sqrt{\alpha}}{e^\alpha}
  \int_{c_\alpha}^{\sqrt\alpha}
  \frac{e^{\xi^2}}{4\xi^3}\frac{8\xi^4}{8\sqrt{\alpha}^4} d\xi
  \approx-
  \frac{1}{4\sqrt{\alpha}^3},\\
  \Delta_3 & = \int_0^{\sqrt{\alpha}-c_\alpha}
  \frac{\int_\xi^{\sqrt\alpha}e^{\eta^2}d\eta\int_0^\xi
    e^{\eta^2}d\eta}{(\sqrt\alpha - \xi) e^{\xi^2} 
    \int_0^{\sqrt\alpha}e^{\eta^2} d\eta} d\xi \approx
  \int_{c_\alpha}^{\sqrt \alpha - c_\alpha} \frac{1}{2\xi(\sqrt\alpha
    - \xi) }d\xi \\ & = \frac{1}{2\sqrt\alpha}
  \int_{c_\alpha}^{\sqrt\alpha - c_\alpha} \frac{1}{\xi} + \frac
  1{\sqrt\alpha - \xi} d\xi \approx
  \frac{\log\alpha}{2\sqrt\alpha},\\
  \Delta_4 & = \int_{\sqrt{\alpha}-c_\alpha}^{\sqrt\alpha}
  \frac{\int_\xi^{\sqrt\alpha}e^{\eta^2}d\eta\int_0^\xi
    e^{\eta^2}d\eta}{(\sqrt\alpha - \xi) e^{\xi^2} 
    \int_0^{\sqrt\alpha}e^{\eta^2} d\eta} d\xi \approx
  \frac{1}{e^\alpha} \int_{\sqrt{\alpha}-c_\alpha}^{\sqrt\alpha}
  \frac{\int_\xi^{\sqrt\alpha}
    \frac{2\eta}{2\sqrt\alpha}e^{\eta^2}d\eta}{\sqrt\alpha - \xi}d\xi
  \\ & = \frac{1}{2\sqrt\alpha}
  \int_{\sqrt{\alpha}-c_\alpha}^{\sqrt\alpha} \frac{1 -
    e^{-(\alpha-\xi^2)}}{\sqrt\alpha - \xi}d\xi =
  \frac{1}{2\sqrt\alpha}\int_{0}^{c_\alpha} \frac{1 -
    e^{-\xi(2\sqrt\alpha - \xi)}}{\xi}d\xi \\ & \approx
  \frac{1}{2\sqrt\alpha}\int_0^{c_\alpha\sqrt\alpha} \frac{1 -
    e^{-2\xi}}{\xi} d\xi \approx \frac{\log\alpha}{4\sqrt\alpha}.
\end{align*}
For the variance,
\begin{eqnarray}
  \label{eq:var2}
    \alpha \mathbb V_0^{\alpha, h=1}[T^\ast] %
& = & 8\alpha \int_0^1
    \int_0^\xi \frac{\Big(\int_0^{1-\xi} e^{\alpha
        \zeta^2}d\zeta\Big)^2 \Big(\int_{1-\eta}^1 e^{\alpha
        \zeta^2}d\zeta\Big)^2}{\Big(\int_0^{1} e^{\alpha
        \zeta^2}d\zeta\Big)^2  \xi(1-\xi)\eta(1-\eta)
      e^{\alpha(1-\xi)^2}e^{\alpha(1-\eta)^2}} d\eta d\xi
\nonumber\\
& = & 8
    \alpha \int_0^{\sqrt\alpha} \int_\xi^{\sqrt\alpha}
    \frac{\Big(\int_0^{\xi} e^{\zeta^2}d\zeta\Big)^2
      \Big(\int_{\eta}^{\sqrt\alpha} e^{
        \zeta^2}d\zeta\Big)^2}{\Big(\int_0^{\sqrt\alpha} e^{
        \zeta^2}d\zeta\Big)^2 
      \xi(\sqrt\alpha-\xi)\eta(\sqrt\alpha-\eta) e^{\xi^2}e^{\eta^2}}
    d\eta d\xi %
\nonumber\\
& \approx& 8 \int_0^{\infty} \int_0^\eta
    \frac{\Big(\int_0^{\xi} e^{\zeta^2}d\zeta\Big)^2}{\xi\eta
      e^{\xi^2}e^{\eta^2}} d\xi d\eta < \infty.
\end{eqnarray}

\bigskip\noindent%
\textbf{Proof of Theorem \ref{T2h}}\\
The proof uses an approach similar to the proof of Proposition 3.4 in
\cite{EtheridgePfaffelhuberWakolbinger2006}. Recall the structured
coalescent from Definition \ref{def:K}. We need to show the following
assertions in the limit of large $\alpha$:
\begin{enumerate}[(i)]
\item[(i)] any line never undergoes a transition 4. (back
  recombination to the beneficial background).
\item[(ii)] any pair of lines never undergoes a transition
  2. (coalescence in the wild-type background).
\item[(iii)] any pair of lines never makes a transition 1.\
  (coalescence in the beneficial background) and the coalesced line
  then makes a transition 3. (recombination to the wild-type).
\item[(iv)] the probability for each line to stay in the beneficial
  background (i.e. no transition 3., recombination to the wildtype
  background) until the origin of the beneficial allele is
  $e^{-\lambda/h}$.
\item[(v)] if two lines do not coalesce (no transition 1.),
  recombination to the wildtype background (transition 3.) occurs for
  both lines independently.
\end{enumerate}
Let us explain how these five assertions imply Theorem \ref{T2h}.
Consider a single line, i.e.\ $\mathcal K_0=(1,0)$. When (i),...,(v)
are shown, (i) and (iv) immediately imply \eqref{T21h}. Next, consider
$\mathcal K_0=(2,0)$. For \eqref{T22h}, the event $K_{T^\ast}^b +
K_{T^\ast}^w=1$ means that an event 1.\ has occurred (since 2.\ does
not occur by (ii)). In addition, the lines coalesce at $\beta=T^\ast$,
the beginning of the sweep. Otherwise there would be the chance that
the coalesced line recombines through a transition 3., which is not
possible by (iii). In particular, the last two arguments show that
$\mathcal K_{T^\ast}=(1,0)$. Hence, $\mathcal K_{T^\ast}=(1,0)$ iff
both lines do not make a transition 3.\ if coalescence cannot
occur. This probability is $e^{-2\lambda/h}$ by (v).

~

The main point in showing the star-like approximation is (iii), as
this assertion exactly implies that the structured coalescent
converges to a star-like tree. So we start with proving this. Since
transitions 1.\ occur at rate $1/X_\beta$ and transitions 3.\ at rate
$\rho(1-X_\beta)$, and using the Green function identity from \citet[equation 2.1.1]{DawsonGorostizaWakolbinger2001} in the third line,
\begin{eqnarray}
  \label{eq:T25}
\lefteqn{\mathbb P [\text{event described in (iii) occurs}]}
\qquad \nonumber\\
& =&  \mathbb
    E^{\alpha, h} \Big[ \int_0^{T^\ast} \int_0^t \rho(1-X_s) \exp\Big(
    - \int_s^{T^\ast} \rho(1+1_{r>t})(1-X_r)dr\Big)
\nonumber \\
&& \qquad\qquad
\cdot \frac{1}{X_t}
    \exp\Big( - \int_t^{T^\ast} \frac{1}{X_r} dr\Big)ds dt\Big]
\nonumber\\
& \leq &
\mathbb E^{\alpha, h} \Big[ \int_0^{T^\ast} \int_0^t
    \rho(1-X_s) \frac{1}{X_t}ds dt\Big]
\nonumber \\
&=& \rho \int_0^1 \int_0^1
    G^\ast (0,\xi) G^\ast (\xi,\eta) (1-\xi) \frac{1}{\eta} d\eta d\xi 
\nonumber\\
& \approx& \frac{4\lambda \alpha}{\log\alpha}(B_1+B_2),
\end{eqnarray}
where $B_1$ and $B_2$ are given below. For $B_1$, we have for some finite $c$, using \eqref{eq:greenX},
\begin{eqnarray}
  \label{eq:T26}
    B_1 & =& \frac 14 \int_0^1 \int_0^\xi G^\ast(0,\xi)G^\ast
    (\xi,\eta) (1-\xi) \frac{1}{\eta} d\eta d\xi
\nonumber\\
& =& \int_0^1
    \int_0^\xi \frac{\Big(\int_\xi^1 e^{-2h\alpha\zeta + (2h-1)\alpha
        \zeta^2}d\zeta\Big)^2 }{\xi\eta^2(1-\eta)
      e^{-2h\alpha\xi + (2h-1)\alpha \xi^2}e^{-2h\alpha\eta +
        (2h-1)\alpha\eta^2} }
\nonumber\\
&& \qquad \qquad \qquad \cdot \frac{\Big(\int_0^\eta e^{-2h\alpha\zeta
        + (2h-1)\alpha\zeta^2}d\zeta\Big)^2}{ \Big(\int_0^1 e^{-2h\alpha\zeta +
        (2h-1)\alpha\zeta^2}d\zeta\Big)^2}d\eta d\xi
\nonumber\\
& =&
    \int_0^{2h\alpha} \int_0^\xi \frac{\Big(\int_\xi^{2h\alpha}
      e^{-\zeta + (2h-1) \zeta^2/(4h^2\alpha)}d\zeta\Big)^2 }{\xi\eta^2(2h\alpha-\eta)
      e^{-\xi + (2h-1) \xi^2/(4h^2\alpha)}e^{-\eta +
        (2h-1)\eta^2/(4h^2\alpha)} }
\nonumber\\
 &&  \qquad \qquad \qquad \cdot
      \frac{ \Big(\int_0^\eta e^{-\zeta +
        (2h-1)\zeta^2/(4h^2\alpha)}d\zeta\Big)^2}{  \Big(\int_0^{2h\alpha}
      e^{-\zeta + (2h-1) \zeta^2/(4h^2\alpha)} d\zeta\Big)^2}d\eta d\xi
\nonumber\\
    & \approx& \frac{1}{2h\alpha}\int_0^{\infty} \int_0^\xi
    \frac{e^{-(\xi-\eta)} \Big(\int_0^\eta
      e^{-\zeta}d\zeta\Big)^2}{\xi \eta^2}d\eta d\xi \approx
    \frac{c}{2h\alpha}.
\end{eqnarray}
For $B_2$, we compute, with some finite $c'$,
\begin{eqnarray}
  \label{eq:T27}
B_2 & =& \frac 14 \int_0^1 \int_\xi^1 G^\ast(0,\xi)G^\ast
    (\xi,\eta) (1-\xi) \frac{1}{\eta} d\eta d\xi
\nonumber\\
& = &\int_0^1
    \int_\xi^1 \frac{\int_0^\xi
      e^{-2h\alpha\zeta+\alpha\zeta^2(2h-1)}d\zeta  \int_\xi^1
      \cdots d\zeta  \int_0^\eta \cdots d\zeta  \int_\eta^1
      \cdots d\zeta}{\xi\eta^2(1-\eta)
      e^{-2h\alpha\xi+\alpha\xi^2(2h-1)}e^{-2h\alpha\eta+\alpha\eta^2(2h-1)}
       \Big(\int_0^1 \cdots d\zeta\Big)^2} d\eta d\xi
\nonumber\\
& =& \int_0^{2h\alpha} \int_\xi^{2h \alpha} \frac{\int_0^\xi e^{-\zeta
        + \zeta^2(2h-1)/(4h^2\alpha)}d\zeta  \int_\xi^{2h\alpha}
      \cdots d\zeta  \int_0^\eta \cdots d\zeta}{\xi\eta^2(2h\alpha-\eta)e^{-\xi +
        \xi^2(2h-1)/(4h^2\alpha)}e^{-\eta + \eta^2(2h-1)/(4h^2\alpha)}} 
\nonumber\\
& & \qquad \qquad \qquad \cdot \frac{
      \int_\eta^{2h\alpha} \cdots
      d\zeta}{       \Big(\int_0^{2h\alpha} \cdots d\zeta\Big)^2}d\eta d\xi
\nonumber\\
    & \approx& \frac{1}{2h\alpha} \int_0^{c_\alpha} \int_\xi^{c_\alpha}
    \frac{\int_0^\xi e^{-\zeta}d\zeta  \int_\xi^{2h\alpha}
      e^{-\zeta} d\zeta  \int_0^\eta e^{-\zeta} d\zeta 
      \int_\eta^{2h\alpha} e^{-\zeta} d\zeta}{\xi\eta^2 e^{-\xi}
      e^{-\eta} \Big( \int_0^\infty e^{-\zeta}d\zeta\Big)^2} d\eta
    d\xi
\nonumber\\
& \approx& \frac{1}{2h\alpha} \int_0^{c_\alpha} \int_0^\eta
    \frac{\int_0^\xi e^{-\zeta}d\zeta  \int_0^\eta e^{-\zeta}
      d\zeta }{\xi\eta^2 }d\xi d\eta = \frac{c'}{2h\alpha}
\end{eqnarray}
where each integrand in $\int \cdots d\zeta$ is the same as the first
integrand in the respective line. Combining \eqref{eq:T25},
\eqref{eq:T26} and \eqref{eq:T27} shows that
$$ \mathbb P[\text{event described in (iii) occurs}] \xrightarrow{\alpha\to\infty}0$$
which proves (iii). Similar calculations show that
\begin{eqnarray*}
  \lefteqn{\mathbb P [\text{event described in (i) occurs}] }\qquad
\\* &\leq& \rho^2
  \int_0^1 \int_0^1 G^\ast(0,\xi) G^\ast(\xi, \eta)\xi (1-\eta)d\eta
  d\xi \xrightarrow{\alpha\to\infty} 0,
\\
  \lefteqn{\mathbb P [\text{event described in (ii) occurs}]} \qquad
\\* &\leq& \rho
  \int_0^1 \int_0^1 G^\ast(0,\xi) G^\ast(\xi, \eta) \frac{1}{1-\xi}
  (1-\eta)d\eta d\xi \xrightarrow{\alpha\to\infty} 0,
\end{eqnarray*}
and (i), (ii) follow. For (iv) and (v), note that the probability that
a line does not recombine at all is
\begin{align*}
  \mathbb P& [\text{event described in (iv) occurs}] = \mathbb E\Big[
  \exp\Big( - \rho\int_0^{T^\ast} (1-X_s) ds\Big)\Big]
\end{align*}
and (v) is equivalent to
\begin{eqnarray*}
\lefteqn{\mathbb P [\text{neither of two lines recombines}] }\qquad
\\ & =& \mathbb E\Big[
  \exp\Big( - 2\rho\int_0^{T^\ast} (1-X_s) ds\Big)\Big] \approx
  e^{-2\lambda/h}.
\end{eqnarray*}
Hence, (iv) and (v) are proved once we show that
\begin{align}\label{eq:tosee}
  \mathbb E\Big[ \exp\Big( - \rho\int_0^{T^\ast} (1-X_s) ds\Big)\Big]
  \approx e^{-\lambda/h}.
\end{align}
In order to see this, we show that
$$
\rho\int_0^{T^\ast} (1-X_s) ds \xrightarrow{\alpha\to\infty} \frac{\lambda}{h}
$$
in $L^2$. Since $L^2$-convergence implies convergence in distribution, and $x\mapsto e^{-x}$ is bounded on $\mathbb R_+$ and continuous, \eqref{eq:tosee} then follows. We compute
\begin{eqnarray}
  \label{eq:tosee2}
\lefteqn{\mathbb E\Big[\rho  \int_0^{T^\ast} (1-X_s) ds\Big] =  \rho
     \int_0^1 G^\ast(0,\xi) (1-\xi) d\xi }\quad
\nonumber\\
& =&
    \frac{2\rho}{\alpha}\int_0^\alpha \frac{\displaystyle\int_0^\xi
      e^{-2h \eta-(1-2h)\eta^2/\alpha}d\eta\int_\xi^\alpha
      e^{-2h\eta-(1-2h)\eta^2/\alpha}d\eta}{\displaystyle \xi
      e^{-2h\xi-(1-2h)\xi^2/\alpha} \int_0^\alpha
      e^{-2h\eta-(1-2h)\eta^2/\alpha}d\eta}d\xi
\nonumber\\
& \approx &
    \frac{4h\rho}{\alpha} A_1 \approx \frac{\rho\log\alpha}{h\alpha}
    \approx \frac{\lambda}{h}
\end{eqnarray}
with $A_1$ from \eqref{eq:T3p1}, where we have used
\eqref{eq:app6}. Next,
\begin{eqnarray}
  \label{eq:tosee3}
    \mathbb V\Big[\rho^2  \int_0^{T^\ast} (1-X_s) ds\Big] & = & 8
    \rho^2 \int_0^1 \int_0^\xi G^\ast(0,\xi) G^\ast(\xi,\eta)
    (1-\xi)(1-\eta) d\eta d\xi
\nonumber\\ &\leq& 8\rho^2 \mathbb V[T^\ast]
    \xrightarrow{\alpha\to\infty} 0
  \end{eqnarray}
by Theorem \ref{T3}. Combining \eqref{eq:tosee2} and
\eqref{eq:tosee3}, $L^2$ convergence follows and we have proved
\eqref{eq:tosee} as well as Theorem \ref{T2h}.

\begin{acknowledgements}
We thank Cornelia Borck for fruitful discussion, two anonymous
referees for various clarifying comments and Reinhard B\"urger for
bringing the contents of Remark \ref{rem:conv1} to our knowledge.

The work was made possible with financial support by the Vienna
Science and Technology Fund (WWTF) to JH and GE and by the Federal
Ministry of Education and Research, Germany (BMBF) to PP (Kennzeichen
0313921). JH and PP acknowledge support from the DFG Forschergruppe
1078 \emph{Natural selection in structured populations}.
\end{acknowledgements}


\end{document}